\def\BibTeX{{\rm B\kern-.05em{\sc i\kern-.025em b}\kern-.08emT\kern-.1667em\lower.7ex\hbox{E}\kern-.125emX}}
\newcolumntype{M}[1]{>{\centering\arraybackslash}m{#1}}
\DeclareMathOperator{\Var}{Var}
\newif\ifshowproofs
\newtheorem{theorem}{Theorem}
\newtheorem{definition}[theorem]{Definition}
\newtheorem{corollary}[theorem]{Corollary}
\newtheorem{lemma}[theorem]{Lemma}
\newtheorem{construction}[theorem]{Construction}
\begin{document}

\title{
Optimal Round and Sample-Size Complexity for Partitioning in Parallel Sorting}

\author[]{Wentao Yang} 
\authornote{Both authors contributed equally}
\affiliation{
\department{Department of Computer Science}
\institution{University of Illinois Urbana-Champaign}
\streetaddress{}
\city{Urbana} 
\state{IL}
\country{USA}
}
\email{wentaoy2@illinois.edu}

 \author[]{Vipul Harsh}
 \authornotemark[1]
\affiliation{
\department{Department of Computer Science}
\institution{University of Illinois Urbana-Champaign}
\streetaddress{}
\city{Urbana} 
\state{IL}
\country{USA}
}
 \email{vharsh2@illinois.edu}

 \author[]{Edgar Solomonik} 
\affiliation{
\department{Department of Computer Science}
\institution{University of Illinois Urbana-Champaign}
\streetaddress{}
\city{Urbana} 
\state{IL}
\country{USA}
}
\email{solomon2@illinois.edu}







%
\begin{CCSXML}
<ccs2012>
<concept>
<concept_id>10003752.10003809.10010170.10010174</concept_id>
<concept_desc>Theory of computation~Massively parallel algorithms</concept_desc>
<concept_significance>500</concept_significance>
</concept>
</ccs2012>
\end{CCSXML}

\ccsdesc[500]{Theory of computation~Massively parallel algorithms}


\keywords{parallel algorithms, parallel sorting, communication cost, communication lower bounds}  

\begin{abstract}
State-of-the-art parallel sorting algorithms for distributed-memory architectures are based on computing a balanced partitioning via sampling and histogramming.
By finding samples that partition the sorted keys into evenly-sized chunks, these algorithms minimize the number of communication rounds required.
Histogramming (computing positions of samples) guides sampling, enabling a decrease in the overall number of samples collected.
We derive lower and upper bounds on the number of sampling/histogramming rounds required to compute a balanced partitioning.
We improve on prior results to demonstrate that when using $p$ processors, $O(\log^* p)$ rounds with $O(p/\log^* p)$ samples per round suffice.
We match that with a lower bound that shows that any algorithm with $O(p)$ samples per round requires at least $\Omega(\log^* p)$ rounds. Additionally, we prove the $\Omega(p \log p)$ samples lower bound for one round, thus proving that existing one round algorithms: sample sort, AMS sort~\cite{axtmann2015practical} and HSS~\cite{hssSPAA} have optimal sample size complexity.
To derive the lower bound, we propose a hard randomized input distribution and apply classical results from the distribution theory of runs. 
\end{abstract}

\maketitle



\section{Introduction}

\subsection{Background}
Sorting a distributed array (a common problem and widely used primitive) is challenging due to the communication overhead needed to infer a load-balanced partition of the global order of the data items, especially for sorting large scale data. 
Modern parallel sorting algorithms minimize data movement by first computing an approximately balanced partitioning of the global order, then sending data directly to the destination processor.
Among the simplest such algorithms is \emph{sample sort}~\cite{frazer1970samplesort}, which infers the partitioning by collecting a sample of data items of each processor and by picking evenly spaced samples as splitters that are close to quantiles in the global order.
Sample sort achieves a balanced partitioning with $p$ processors given a sample size of $\Theta(p \log p)$, but such a sample can become expensive to store/process for large $p$, such as in a work by Cheng et al.~\cite{Cheng07anovel}, where $p$ is often in the order of hundred thousands.
Multi-round sampling can be used to lower the total sample size, by \emph{histogramming}~\cite{kale1993comparison,solomonik2010highly} samples and drawing subsequent samples from a subset of the global input range.
Specifically, a \emph{histogram} of a sample gives the rank in the global order of each item in the sample.
For example, the \emph{Histogram sort with sampling} algorithm has been shown to require $O(\log \log p)$ rounds of $O(p)$ samples per round to achieve a balanced partitioning. 

There is a fundamental tradeoff between the total sample size and the total of number of sample-histogram rounds (which can be performed with 2 BSP supersteps). Less rounds are desired, but additional rounds permit more selective sampling and consequently less communication.
The exact relationship of the tradeoff, as well as the optimal round complexity and sample complexity remain open problems.

\subsection{Our contribution}
In this paper, we aim to address these problems. Our contribution is twofold. First, we provide a tighter analysis of Histogram sort with sampling~\cite{hssSPAA}.
We demonstrate that with a total sample size complexity of $O(p)$ using $O(\log^* p)$ rounds\footnote{
We use the following definition of $\log^* x$ for any positive real number $x$: $\log^* x =0$ if $x \leq 1$, $\log^* x=      1 + \log^* (\log x)$  if $x > 1$.
} of sampling followed by histogramming (Section ~\ref{histopartAlg}) suffices to achieve a balanced partition with high probability. 
We refer to the adjusted sampling/histogramming stage of the HSS algorithm as \textit{Histogram Partitioning}.
Second, we show that the sample size and round complexity of the approach are optimal by proving (Section~\ref{lowerBound}) that with $O(p)$ samples per round, $\Omega(\log^* p)$ rounds are necessary for any randomized algorithm to compute a balanced partitioning.
To prove this lower bound, we design a difficult input distribution and apply Yao's principle,  
making use of 
results from distribution theory of success runs to derive the result.
A corollary of our lower bounds is that, since any one round algorithm requires $\Omega(p \log p)$ samples - Sample sort, HSS with one round, AMS-sort \cite{axtmann2015practical} are essentially optimal in terms of sample size complexity. Figure~\ref{fig:tradeoff} illustrates the various tradeoff points between the number of sample-histogram rounds vs the total sample size and our lower bounds. 
 
 
\subsection{Related work}
In data-heavy large scale settings, often it's infeasible to move data multiple times. Hence, large scale parallel sorting algorithms~\cite{axtmann2015practical,hssSPAA,sundar2013hyksort,solomonik2010highly} first determine a set of $p-1$ splitters to split the data into $p$ buckets ($p$ is the total number of processors). The set of splitters are broadcast to all processors and each processor sends keys falling between the $(i-1)^{th}$ and the $i^{th}$ splitter to $i^{th}$ processor. The choice of the splitters directly determines how load balanced the final data is across processors. Often, algorithms~\cite{hssSPAA,axtmann2015practical} provide an extra guarantee that each processor would end up with no more than $\frac{N(1+\epsilon)}{p}$ keys where $N$ is the total number of keys across all processors. Past works on parallel sorting have extensively studied algorithms that efficiently compute the set of splitters (also referred to as a partition) that  guarantee $(1+\epsilon)$ load imbalance. The data-partitioning algorithm is also the focus of this paper. The final part of parallel sorting is the data exchange phase where each processor sends data to each other based on the splitters. Efficient data-exchange algorithms are outside the scope of this paper. We refer the reader to ~\cite{axtmann2015practical} for a study.

Sample sort~\cite{frazer1970samplesort}, an early parallel sorting algorithm achieves $(1+\epsilon)$-balanced partitioning  by sampling keys either randomly~\cite{blelloch1998experimental} or evenly across all processors~\cite{li1993versatility}, requiring a total sample size of $O\big(\frac{p\log p}{\epsilon^2}\big)$. Histogramming is the process of carrying out a reduction across all processors to obtain the ranks of all sampled keys. Histogramming allows the partitioning algorithm to make a more informed choice and reduces the sample size requirements. AMS-sort with 1 sample-histogram round requires $O(p(\log p+\frac{1}{\epsilon}))$ samples. We note that the sample size is directly proportional to the communication cost of the data partitioning step~\cite{hssSPAA}, hence algorithms aim to minimize the sample size. 

One could achieve a significantly lower sample size complexity by repeated rounds of sampling followed by histogramming~\cite{sundar2013hyksort,hssSPAA}. Histogram sort with Sampling (HSS)~\cite{hssSPAA} achieves a sample size complexity of $O(p\log \log p)$ with $O(\log \log p)$ rounds of sampling/histogramming, while Hyksort with a slightly different sampling algorithm achieves a sample size complexity of $O(p\log p)$~\cite{hssSPAA}. Note that in this work, we focus on sample sort and histogram sort-like algorithms, which require a reduction of samples to one processor and a broadcast of data from that processor to others. We refer the reader to ~\cite{hssSPAA} for a detailed review of these algorithms.

The optimal communication cost for parallel sorting algorithms in a general BSP model has been previously studied by Goodrich~\cite{goodrich1999communication}.
Goodrich~\cite{goodrich1999communication} gives an optimal algorithm that requires $O\big(\frac{\log N}{\log{h+1}}\big)$ rounds with each processor sending and receiving at most $h=\Theta\big(\frac{N}{p}\big)$
items per round. Goodrich also proves a tight round lower bound $\Omega\big(\frac{\log N}{\log{h+1}}\big)$.
Our lower bound analysis applies in a more restricted setting, which we discuss in Section~\ref{sec:model}.
This theoretical setting characterizes partition-based parallel sorting algorithms that perform well on current architectures (HSS and HykSort).
Partition-based algorithms have the advantage of moving data (aside from samples) only once or twice, unlike merge-based algorithms such as the one employed by Goodrich.


To the best of our knowledge, our analysis is the first to apply the distribution theory of runs~\cite{mood1940distribution}, which quantify the frequency of sequences of successes in a series of random trials, to analysis of parallel sorting/partitioning.
This connection may be employed for new or simplified analysis of other sampling-based partitioning algorithms for parallel sorting or related problems.
Further information on the topic may be found in prior surveys of topics related to the distribution theory of runs~\cite{balakrishnan2011runs}.

\subsection{Problem Statement}



Let $A(1),\ldots,A(N)$ be an input sequence of keys (without duplicates) distributed across $p$ processing elements, such that each processor has $\frac{N}{p}$ keys. We assume that the keys could be of any data type, so that the algorithm is purely comparison-based. Let $R(k)$ denote the rank of key $k$ in the array $A$ and let $I(r)$ denote the key of rank $r$.
The goal of parallel sorting is to sort the input keys, so that the $i^{th}$ processor owns the $i^{th}$ subsequence of the sorted keys $I(1),\ldots, I(N)$ and that each subsequence consists of at most $(1+\epsilon)\frac{N}{p}$ keys.
The goal of the partitioning algorithm is to determine $p-1$ splitter keys $s_1, s_2, \ldots,  s_{p-1}$ which divide the input range into $p$ buckets. We refer to the set of splitter keys as a partition.
We wish to achieve a well-balanced partition so that the size of each bucket is small. We choose $s_0$ and $s_p$ such that $R(s_0) = 0$ and $R(s_p) = N$ for notational convenience. 

Though we focus on the parallel partitioning problem in this paper, we make references to the parallel sorting problem since parallel partitioning is a crucial part of parallel sorting- if the input data has been partitioned, each processor can simply sort locally which solves the parallel sorting problem. 


\begin{definition} \label{partitionDefinition}
A partition $(s_0 \leq s_1 \leq s_2 ... \leq s_p)$ is $(1+\epsilon)$-balanced if $R(s_{i+1}) - R(s_{i}) \leq (1+\epsilon)\frac{N}{p} \ \ \forall i\in \{0,\ldots, p-1\}.$ 

\end{definition}
Table~\ref{tab:notation} defines further notation used in the algorithm and the lower bound proofs.

\begin{table}[htb]
\centering
\footnotesize
\sf
\begin{tabular}{@{}l|ll@{}}
\toprule
\multirow{5}{*}{\begin{turn}{90}\Shortunderstack{{algorithm}}\end{turn}} 
& $p$ & number of processors sorting keys \\
& $N$ & number of keys to sort \\
& $A(j)$ & the $j^{th}$ input key, $A$ is the set of all keys  \\
& $R(i)$ & initial index of the key ranked $i^{th}$ globally \\
& $I(r)$ & key with rank $r$ in the overall global order\\
& $W_i$ & the set of undetermined splitters are the $i^{th}$ round\\
\hline
\multirow{6}{*}{\begin{turn}{90}\Shortunderstack{{lower bound}}\end{turn}} 
& $C$ &\Shortunderstack{the number of sub-intervals in each interval (same as the number \\of contiguous sequences at the start of the first round)} \\
& $T$ & the number of rounds in the algorithm\\
& $D_i$ & bound on the number of contiguous sequences at round $i$\\
& $Q_i$ & the number of intervals in each contiguous sequence at round $i$\\
& $K_i$ & the number of samples available for each interval on average at round $i$\\

\bottomrule
\end{tabular}
\caption{Notation used in this paper.}
\label{tab:notation}
\vskip -0.65 cm
\end{table}

\subsection{Cost and Execution Model}
\label{sec:model}
In this work, we restrict our attention to sample-sort and histogram-sort like algorithms in the BSP model. This allows us to prove a different lower bound on round complexity than the previously mentioned result by Goodrich in the general setting of parallel sorting in the BSP model~\cite{goodrich1999communication}. Our lower bound also applies to the Map-Reduce model, in which only one-to-all and all-to-one communications are allowed. 


We analyze partitioning algorithms based on the total number of samples collected and the number of sampling/histogramming rounds.
These bounds naturally correspond to round/synchronization complexity and communication cost in multiple models.
One such model is the bulk synchronous parallel (BSP)~\cite{valiant1990bridging}.
In the BSP model, algorithms proceed in \textit{supersteps}\footnote{
    To clarify the notation, a round of sampling and broadcasting consists of two BSP supersteps.
}.
In each superstep,
\begin{itemize}[leftmargin=*]
\item each processor spends a fixed amount of time performing local computation,
\item each processor sends/receives message up to a given total size ($h$) to/from other processors.
\end{itemize}

Our algorithm and lower bound for the cost of partitioning apply in a BSP setting with $h=p$, but also require that only $O(p)$ distinct keys are collected/communicated at each step.
To align our setting with that of Goodrich~\cite{goodrich1999communication}, we would need $h=\Theta(p)=\Theta(n/p)$, in which case Goodrich's algorithm requires a constant number of rounds, as it leverages communication/exchange $O(hp)$ total keys per round instead of $O(h)$.
We further note that our lower bound applies to the weaker problem of finding a $(1+\epsilon)$-balanced partitioning as opposed to complete parallel sorting.


Our results also lead to complexity bounds for algorithms in the massively parallel computing (MPC) model~\cite{mapReduceModel}. 
The MPC model is very similar to the BSP model, but it focuses on bounding the number of rounds in a distributed algorithm, given a limit on the amount of memory.
In partitioning algorithms $\Theta\big(\frac{N}{p}+p\big)$ memory is typically used to store local data and the partition.
Our algorithm requires $O\big(\frac{N}{p}+p\big)$ memory and $O(\log^* p)$ rounds, and our tight lower bound still holds in the setting given similar restrictions.
This constitutes a reduction in memory footprint by $\Theta(\log p)$ with respect to sample sort.
Figure~\ref{fig:tradeoff} provides a summary of the lower and upper bounds we introduce, as well as past work (Sample sort, HSS~\cite{hssSPAA} and AMS-sort~\cite{axtmann2015practical}).

\noindent \textbf{Collective operations}: In our algorithm, we make use of collective operations that are extensively used in parallel algorithms. We refer to figure 1 of ~\cite{collectiveOps} for a quick overview of some common collective operations. Specifically, in the BSP model, broadcasts of $O(p)$ elements and reductions of $O(p)$ elements from each process, are efficiently done via scatter or reduce-scatter, respectively, followed by all-gather, both of which have communication cost $O(p)$ and require 1 BSP superstep.



\section{SAMPLE ROUND COMPLEXITY UPPER BOUND AND FAST ALGORITHM} \label{histopartAlg}

\begin{figure}
\begin{center}
\includegraphics[width=0.3\textwidth]{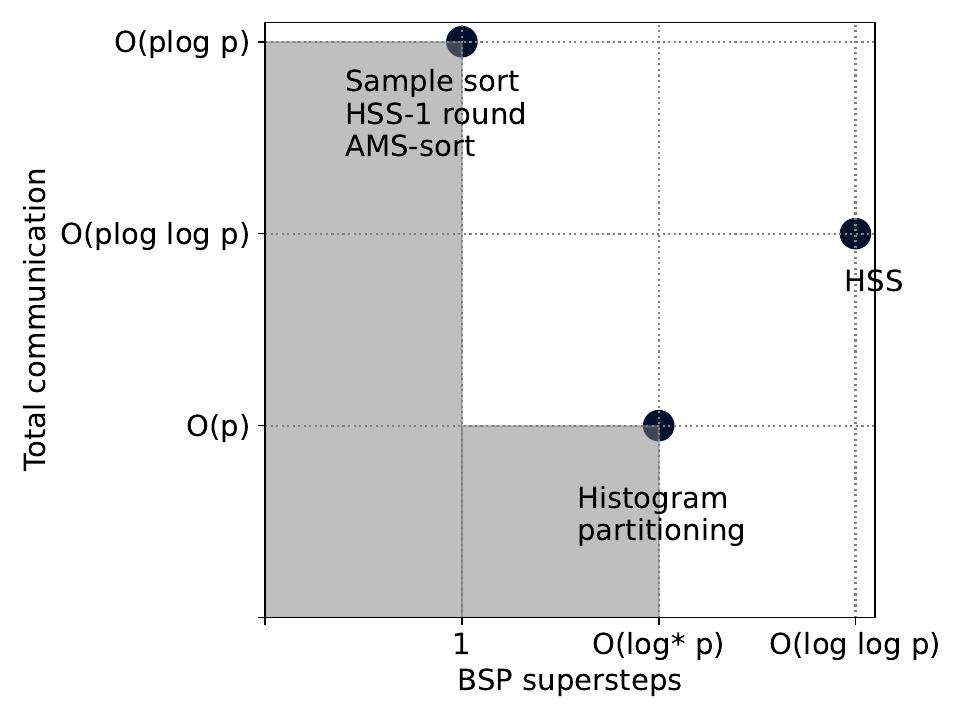}
\caption{Complexity of histogram/sample-sort approaches in BSP supersteps and total communication. Histogram partitioning is the sampling/histogramming approach introduced and analyzed in this paper. The shaded region represents our lower bounds, no algorithm based on global sampling can be strictly inside the region.}
\label{fig:tradeoff}
\end{center}
\end{figure}


In this section, we review the HSS algorithm~\cite{hssSPAA}, expressed for a more general sample size.
We then analyze the histogramming/sampling stage with a different choice of sampling ratios, which yields our Histogram Partitioning approach.
For any constant $\epsilon > 0$, when seeking a $(1+\epsilon)$-balanced partitioning, the new analysis shows that $O(\log^* p)$ rounds and $O(p)$ communication volume suffice (opposed to $O(\log \log p)$ rounds and $O(p \log \log p)$ communication volume, as analyzed in ~\cite{hssSPAA}).

\subsection{Histogram Sort with Sampling} \label{HP alg}
We first describe the Histogram Sort with Sampling (HSS) algorithm from~\cite{hssSPAA} for a general sampling ratio.
In each round, the HSS algorithm computes a histogram of a sample of keys.
A histogram of keys $a_1,\ldots, a_q$ is given by their global ranks $R(a_1),\ldots, R(a_q)$.
The global ranks are computed by summing (reduction) of the local ranks of the keys $a_1,\ldots, a_q$ within the set of keys assigned to each processor.
Each histogram serves to narrow down the range of keys out of which the subsequent samples are collected, with the aim of finally finding keys $a_1,\ldots, a_{p-1}$, whose global ranks give an accurate (load-balanced) splitting, namely $R(a_i)\in [i\frac{N}{p}, i(1+\epsilon)\frac{N}{p}]$.
We refer to such an $a_i$ as a 'determined' splitter.

In general, given two keys $a$, $b$ with $a<b$, we define an interval $[a,b]$ as the set of keys in the input sequence that are greater than $a$ and smaller than $b$.
HSS maintains bounds for each of the $p-1$ splitters, which are successively improved.
In the $i^{th}$ round the bounds for splitter $j$ are $[L_i(j),U_i(j)]$, with $L_1(j)=I(1)$, $U_1(j)=I(N)$ for all $j\in\{1,\ldots, p-1\}$.
The set of undetermined splitters is at round $i$ is $W_i$.
The set of keys from which HSS collects a sample is given by the union of the bounds for all splitters which the algorithm has not yet determined.
For a choice of sampling ratio $\psi_i$, the HSS algorithm computes the following steps at the $i^{th}$ round (the pseudocode is also given in the Appendix as Algorithm~\ref{hss:alg}):
\begin{itemize}
\item if all $p-1$ splitters have been determined, broadcast them, redistribute keys so that processor $i$ receives keys between the $i^{th}$ and $(i+1)^{th}$ splitters, sort locally, and terminate, otherwise,
\item each processor collects a local sample by 
sampling each local key in $\gamma_i = \bigcup_{j\in W_i}[L_j,U_j]$ with probability $\psi_i$,
\item the combined sample $a_1,\ldots, a_\eta$ is gathered globally, (e.g., by gathering all sample keys to one processor then broadcasting to all processors),
\item each processor computes a local histogram of the sample,
\item a global histogram $H=\{h_1,\ldots, h_\eta\}$, $h_j=R(a_j)$ is computed by reduction of local histograms,
\item if any $H\ni h_j\in [\eta\frac{N}{p}, \eta(1+\epsilon)\frac{N}{p}]$, the corresponding sample $a_j$ is a determined $\eta^{th}$ splitter,
\item for all undetermined splitters, $\eta\in W_{i+1}$, update the splitter bounds using the new histogram, 
\begin{align*}
L_{i+1}(\eta)&=\max\Big(L_{i}(\eta),\max_{h_j\in H, h_j<\eta \frac{N}{p} }(h_j)\Big)\quad \text{and} \\ 
U_{i+1}(\eta)&=\min\Big(U_i(\eta),\min_{h_j\in H, h_j>\eta \frac{N}{p}(1+\epsilon)}(h_j)\Big),
\end{align*}
\item broadcast updates to the splitter bounds, namely the change to $W_{i+1}$, $W_{i}\setminus W_{i+1}$ and $[L_j,U_j]$ for all $j\in W_{i+1}$.
\end{itemize}
In~\cite{hssSPAA}, the sampling ratio at each round is determined by seeking a total sample of a fixed size ($O(p)$) at each step.
In this work, we consider a constant $\epsilon \leq 1$ and set $\psi_i = \frac{p^2}{|W_i|N\log^* p}$.
Since $\gamma_i$ is composed of $|W_i|$ undetermined intervals as well as (due to $\epsilon\leq 1$) at most 1 determined interval before and after each undetermined interval, $|\gamma_i \cap A| \leq 3|W_i|N/p$.
Hence, the expected number of samples in the $i^{th}$ round is  $\psi_i|\gamma_i \cap A|\leq 3\psi_iN|W_i|/p = \frac{3p}{\log^* p}$.
This choice allows us to obtain a total communication volume of $O(p)$, since $O(p)$ communication suffices for both sampling and collecting updated splitter information on all processors.
In the next subsection, we show that w.h.p.\ $O(\log^* p)$ rounds of HSS suffice with this choice of sample ratio.
If aiming to achieve a small (non-constant) $\epsilon$, another $O(\log(\frac{1}{\epsilon}))$ 
histogramming rounds with $O(p)$ communication volume per round are required, following the approach/analysis in~\cite{hssSPAA}.



\subsection{New Running Time Upper Bound}

We show that the modified HSS algorithm (which we refer to as Histogram Partitioning), finishes in $O(\log^* p)$ rounds and requires $O(p)$ samples in total (stated in Theorem~\ref{new-algorithm bound}). 
Our analysis is based on the intuition that the algorithm can be interpreted as a randomized string cutting process. We can think of the global input of keys (in sorted order) as a string of length $N$. Sampling a random key of rank $r$ is equivalent to cutting the string at the $r^{th}$ of $N$ points.
Each substring that is long enough yields an independent partitioning subproblem.
Every Histogram Partitioning round cuts each substring in as many random points as the number of samples selected from that subrange.


Our analysis is based on 2 phases.
In the first phase, there are still few substrings, so the sampling density is low.
In the second phase, many splitters have already been determined, so the total length of all substrings is relatively small.
We now define these phases formally and show that both complete in $O(\log^* p)$ rounds.

We split the iterations of Algorithm~\ref{HP alg} into two phases. In the first phase, we show the partitioning algorithm makes progress so that $O\Big(\frac{p}{\log^ *p}\Big)$ new splitters are determined in each iteration (Lemma~\ref{phase 1 progress}). In the second phase, in each iteration the number of undetermined  splitters goes down quickly according to the $\log^*$ function (Lemma~\ref{undeterminedSplittersExponentialDecrease}). 
\begin{itemize}
\item Phase 1: Iterations where the number of undetermined splitters exceeds $\frac{p}{\log^* p}$ before the start of the iteration.
\item Phase 2: Iterations where the number of undetermined splitters is at most $\frac{p}{\log^* p}$ before the start of the iteration.
\end{itemize}
Since the total number of samples in each round is fixed ($\frac{3p}{\log^* p}$ in expectation), we only need to bound the number of iterations for a complete analysis. We show that both phases of the algorithm finish in $O(\log^* p)$ iterations with high probability (w.h.p.). 



Our analysis keeps track of the of the size of the domain, $\gamma_j$, of input keys that are sampled from at the $j^{th}$ step.
The size of $\gamma_j$ decreases whenever adjacent splitters are determined.
Hence, if most splitters have been determined, the length of the input cannot be much larger than the set of intervals corresponding to undetermined splitters.

We first derive a lower bound for the probability of a previously undetermined splitter being determined in round $j$. Note that the sampling method ensures that sampling in disjoint intervals is independent.
\begin{lemma} \label{phase 1 progress}
If at the start of the $j^{th}$ step of the Histogram Partitioning algorithm, $k=|W_j|\geq \frac{p}{\log^*p}$ splitters are undetermined (Phase 1), then the probability that a given undetermined splitter is determined during the $j^{th}$ step is at least $\frac{p}{2k\log^*p}$.
\end{lemma}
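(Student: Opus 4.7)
The plan is to turn the ``given unachieved splitter'' into a simple coin-flipping calculation by first using Lemma~\ref{sampleable length} to obtain a lower bound on the per-key sampling probability $s_j$, and then computing the chance that at least one of the $N/p$ keys lying in the target rank range $[\tfrac{Nk'}{p},\tfrac{Nk'}{p}+\tfrac{N}{p}]$ of the given unachieved splitter $s_{k'}$ is included in the sample.

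First, I would observe that since $k$ splitters remain unachieved at the start of step $j$, Lemma~\ref{sampleable length} yields $|\gamma_j|\le 3kN/p$, and hence by the definition $s_j=\tfrac{3N}{|\gamma_j|\log^*p}$ we get
\[
s_j \;\ge\; \frac{3N}{(3kN/p)\log^*p} \;=\; \frac{p}{k\log^* p}.
\]
In Phase~1 we have $k\ge p/\log^* p$, so this lower bound is consistent with $s_j\le 1$, i.e.\ $s_j$ is a genuine probability. Next, because the Histogram Partitioning algorithm samples each key of $\gamma_j$ independently with probability $s_j$, and the target rank range for the chosen unachieved splitter is a subset of $\gamma_j$ containing exactly $N/p$ distinct keys, the number of samples falling in it is $\operatorname{Bin}(N/p,s_j)$. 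The splitter is achieved in step $j$ iff this binomial is positive, so the probability of achievement equals
\[
1-(1-s_j)^{N/p}\;\ge\;1-\exp\!\left(-\tfrac{s_j N}{p}\right).
\]

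From here I would do a short two-case analysis on the size of $y:=s_j N/p$. If $y\ge 1$, then $1-e^{-y}\ge 1-1/e>1/2\ge \tfrac{p}{2k\log^* p}$ (the last inequality uses $k\ge p/\log^* p$). If $y<1$, then the elementary bound $1-e^{-y}\ge y/2$ gives a probability of at least $\tfrac{s_j N}{2p}\ge \tfrac{N}{2k\log^* p}\ge \tfrac{p}{2k\log^* p}$, where the final step uses $N\ge p$. Combining the two cases finishes the proof.

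I do not anticipate a significant obstacle: the only subtlety is ensuring the Bernoulli trials for the $N/p$ keys in the target interval are truly independent (which follows from the definition of the sampling step, since each key of $\gamma_j$ is flipped independently) and keeping the two regimes of $y$ clean so that the bound $\tfrac{p}{2k\log^* p}$ drops out uniformly. Both hinge on the Phase-1 hypothesis $k\ge p/\log^* p$ and the trivial inequality $N\ge p$.
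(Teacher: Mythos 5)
Your argument takes the same route as the paper's: invoke Lemma~\ref{sampleable length} to bound $|\gamma_j|$, lower-bound the per-key sampling probability $s_j$, then compute $1-(1-s_j)^{N/p}\ge 1-e^{-s_jN/p}$ and finish with the elementary estimate $1-e^{-y}\ge y/2$. The calculation itself is fine, but there is a real issue with the value of $s_j$ you feed into it. You use $s_j=\tfrac{3N}{|\gamma_j|\log^*p}$, which is indeed what Section~\ref{HP alg} literally writes; however, if the per-key sampling probability were $\tfrac{3N}{|\gamma_j|\log^*p}$, the expected number of samples collected in the round would be $|\gamma_j|\,s_j=\tfrac{3N}{\log^*p}$, which is $\Theta(N/\log^*p)$ rather than the $\Theta(p/\log^*p)$ that the paper claims and that the whole $O(p)$-total-sample result depends on. The $3N$ in the algorithm text is a slip; the per-key rate consistent with the claimed sample size, and the one the paper's own proof of this lemma actually uses, is $s_j=\tfrac{3p}{|\gamma_j|\log^*p}$. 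With that correction, $s_j\ge\tfrac{3p}{(3kN/p)\log^*p}=\tfrac{p^2}{kN\log^*p}$, so $y:=s_jN/p\ge\tfrac{p}{k\log^*p}$, and the Phase~1 hypothesis $k\ge p/\log^*p$ already forces this lower-bounded $y$ to be at most~$1$. Your two-case analysis therefore reduces to the single case $y<1$, and that branch is precisely the paper's chain
\[
1-e^{-y}\ \ge\ y-\tfrac{y^2}{2}\ \ge\ \tfrac{y}{2}\ \ge\ \tfrac{p}{2k\log^*p}.
\]
In short: right idea, right final inequality, but as written your proof is for a version of the algorithm that draws $\Theta(N/\log^*p)$ samples per round (which would make the lemma and the whole theorem trivial); substituting the correct sampling rate both fixes this and removes the need for the $y\ge1$ case.
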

\begin{proof}
The $j^{th}$ round of Histogram Partitioning samples elements in $\gamma_j$ with probability $\psi_j= p^2 /(|W_j|N\log^*p)$.
The probability that any previously undetermined splitter is determined is then given by the probability that at least one of the $\frac{N}{p}$ keys in the corresponding interval is sampled,
\begin{align*}
1 - \Big(1 - \psi_j\Big)^\frac{N}{p} =& 1 - \Big(1 - \frac{p^2}{|W_j|N\log^* p}\Big)^\frac{N}{p} 
=  1 - \Big(1 - \frac{p^2}{kN\log^* p}\Big)^\frac{N}{p} \\
\geq & 1 - e^{\frac{-p}{k\log^* p}} 
\geq  \frac{p}{k\log^* p} - \frac{1}{2}\big(\frac{p}{k\log^* p}\big)^2 \\
\geq & \frac{p}{k\log^* p} - \frac{1}{2}\big(\frac{p}{k\log^* p}\big) = \frac{p}{2k\log^* p}.
\end{align*}
The last inequality uses that $\frac{p}{k\log^* p} \leq 1$, which follows from $k \geq \frac{p}{\log^* p}$, since the algorithm is in Phase 1.
\end{proof}

Lemma~\ref{phase 1 progress} allows us to bound the number of iterations in phase 1.
\begin{lemma} \label{phase 1 bound}
Phase 1 of the algorithm finishes (fewer than $\frac{p}{\log^* p}$ samples remain undetermined) after $O(\log^* p)$ iterations w.h.p..
\end{lemma}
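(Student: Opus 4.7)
The plan is to combine Lemma~\ref{phase 1 progress} with a Chernoff-style concentration argument to show that each round of Phase 1 achieves $\Omega(p/\log^* p)$ new splitters with high probability, and then to divide the total number of splitters that must be achieved by this per-round progress.

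First, I would observe that at the start of any Phase 1 round with $k\geq p/\log^* p$ unachieved splitters, Lemma~\ref{phase 1 progress} gives each of those splitters probability at least $p/(2k\log^* p)$ of being achieved in the round. The crucial structural point is that the target intervals $[Nl/p,\,Nl/p+N/p]$ for distinct unachieved indices $l$ are pairwise disjoint subranges of $\gamma_j$, and the Bernoulli sampling used by the algorithm is independent across distinct keys. Hence the indicator variables for "splitter $l$ is achieved in round $j$" are mutually independent, and their sum $X_j$ is a sum of $k$ independent Bernoullis with $\mathbb{E}[X_j]\geq k\cdot \frac{p}{2k\log^* p}=\frac{p}{2\log^* p}$.

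Second, I would apply the multiplicative Chernoff bound to $X_j$ with deviation $\delta=1/2$, yielding
\[
\Pr\!\left[X_j \leq \frac{p}{4\log^* p}\right] \;\leq\; \exp\!\left(-\frac{p}{16\log^* p}\right),
\]
which is exponentially small in $p$. Consequently, with high probability each Phase 1 round achieves at least $p/(4\log^* p)$ new splitters. Since Phase 1 begins with at most $p$ unachieved splitters and ends once the number drops below $p/\log^* p$, the total number of splitters that must be achieved in Phase 1 is at most $p-p/\log^* p$, so the phase terminates within
\[
\left\lceil \frac{p-p/\log^* p}{p/(4\log^* p)} \right\rceil \;=\; 4(\log^* p-1) \;=\; O(\log^* p)
\]
rounds. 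A union bound over these $O(\log^* p)$ rounds preserves the high-probability guarantee, since the per-round failure probability is exponentially small in $p$.

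The main obstacle is justifying the mutual independence of the per-splitter achievement events cleanly; this requires being explicit that the "target intervals" in the definition of an achieved splitter are disjoint, that the sampling in round $j$ is an independent Bernoulli trial per key, and that the set of unachieved splitters is frozen at the start of the round so no conditioning on earlier achievements within the same round is needed. A minor care point is that after Phase 1 ends the bound $k\geq p/\log^* p$ may fail mid-round; but since we only need a lower bound on progress until the phase-ending threshold is first crossed, this causes no issue.
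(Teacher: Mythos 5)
Your proof is correct and follows essentially the same route as the paper's: use Lemma~\ref{phase 1 progress} to lower-bound the expected progress, argue independence of the per-splitter achievement indicators via the disjointness of target intervals and per-key Bernoulli sampling, apply a multiplicative Chernoff bound, and conclude. The only differences are cosmetic (you take deviation $\delta=1/2$ where the paper takes $\delta=1/3$), and you actually go slightly further than the paper's terse proof by explicitly tallying the round count and noting the union bound over $O(\log^* p)$ rounds, both of which the paper leaves implicit.
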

\begin{proof}
We use a multiplicative Chernoff's bound to show that in each iteration, at least $\frac{p}{3\log^* p}$ new splitters are determined w.h.p..
Specifically, let $X$ be the number of determined splitters in the $j^{th}$ round of phase 1. 
We have $X = X_1  + X_2 \ldots + X_k$ where $X_i$ is an indicator random variable with $X_i = 1$ if the $i^{th}$ undetermined splitter is determined in the $j^{th}$ round. Note that the sampling method ensures that sampling in disjoint intervals are independent. Hence, all $X_i$'s are independent of each other. 
From Lemma~\ref{phase 1 progress}, we have $\mathbb{E}[X] \geq k \frac{p}{2k\log^* p} = \frac{p}{2\log^* p}$, where $k$ is the number of undetermined splitters before the $j^{th}$ round. 
Using multiplicative Chernoff bounds, we get
\begin{align*}
    \Pr[X < \frac{p}{3\log^* p}]  =&   \Pr\Big[X < \frac{p}{2\log^* p}\big(1 - \frac{1}{3}\big)\Big] \leq e^{-\frac{p}{2\log^* p}\frac{1}{3^2}},
\end{align*}
which gives the high probability bound.

\end{proof}



We now bound the number of iterations needed for phase 2 to find all remaining undetermined splitters.
In phase 2, the number of undetermined splitters are less than or equal to $ \frac{p}{\log^* p}$.
We show that the number of undetermined splitters remaining after a round decreases exponentially with the gap from $\frac{p}{\log^* p}$.
\begin{lemma} \label{undeterminedSplittersExponentialDecrease}
If the number of undetermined splitters is $k \leq \frac{p}{t\log^* p}$ with $1 \leq t \leq \log \log p$ at the start of the $j^{th}$ iteration of the Histogram Partitioning algorithm, then the number of undetermined splitters after the $j^{th}$ iteration is at most $k/2^t$, w.h.p..
\end{lemma}
\begin{proof}
Let $X_i$ be an indicator random variable, with $X_i = 1$ denoting that the $ith$ splitter remains undetermined after the iteration. 
All $X_i$'s are independent. 
To show the result in the theorem, we seek to bound $X = X_1 +\cdots +X_k$.
The probability that the $i^{th}$ previously undetermined splitter remains undetermined is 
\begin{align*}
\Pr[X_i = 1] = \Big(1 - \psi_j\Big)^\frac{N}{p} =& \Big(1 - \frac{p^2}{kN\log^* p}\Big)^\frac{N}{p} \leq  \Big(1 - \frac{pt}{N}\Big)^\frac{N}{p} \\
\leq & e^{-t}.
\end{align*}
Hence, $\mathbb{E}[X] \leq ke^{-t}$.
Since our sampling strategy is such that sampling within splitter intervals is independent, we can apply a multiplicative Chernoff bound to bound $X$ with high probability. 

Using Chernoff's bound\footnote{We use the multiplicative Chernoff bound, $P[X > (1+\epsilon)\mathbb{E}[X]] \leq e^{\frac{-\epsilon^2\mathbb{E}[X]}{2}}$.}, we get
\begin{align*}
\Pr[X \geq  k2^{-t}] =& \  \Pr\bigg[X \geq \mathbb{E}[X] \bigg(1 + \frac{ k2^{-t}-\mathbb{E}[X]}{\mathbb{E}[X]}\bigg)\bigg] 
\leq \ e^{-\frac{B^2\mathbb{E}[X]}{2}},
\end{align*}
where $B = \frac{ k2^{-t}-\mathbb{E}[X]}{\mathbb{E}[X]}$. 
Since, $ke^{-t}/\mathbb{E}[X] \geq 1$,
\begin{align*}
B^2\geq &((e/2)^{t} - 1)^2\geq 1/9.
\end{align*}
Thus, the probability that $X$ is less than the bound is small, since $e^{-\frac{B^2\mathbb{E}[X]}{2}} = O(e^{-\mathbb{E}[X]}) = O(e^{-p/(\log p \log^*p)})$.
\end{proof}

Lemma~\ref{undeterminedSplittersExponentialDecrease} shows that the number of undetermined splitters goes down rapidly from $\frac{p}{t\log^* p}$ to $\frac{p}{t2^t\log^* p}$, which immediately leads us to an upper bound for the number of iterations in phase 2 to achieve all splitters.

\begin{lemma} \label{phase 2 bound}
Phase 2 of the Histogram Partitioning algorithm (iterations after fewer than $\frac{p}{\log^* p}$ splitters remain undetermined) finishes (all splitters are determined) after $O(\log^* p)$ iterations w.h.p..
\end{lemma}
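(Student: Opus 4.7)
The plan is to iterate Lemma~\ref{unachievedSplittersExponentialDecrease} and track how the ``slack parameter'' $t$ grows between rounds. Entering Phase 2 we have at most $p/\log^* p$ unachieved splitters, which matches the hypothesis of Lemma~\ref{unachievedSplittersExponentialDecrease} with $t = 1$. Let $k_i$ denote the number of unachieved splitters at the start of the $i$th Phase 2 round, and set $t_i = p / (k_i \log^* p)$, so that $k_i = p/(t_i \log^* p)$ and $t_0 = 1$. Applying the lemma, either $k_{i+1} \le p/(\log p \log^* p)$ (in which case we are done after one more round by the sample-sort/one-round HSS analysis noted in the proof of Lemma~\ref{unachievedSplittersExponentialDecrease}), or $k_{i+1} \le p/(t_i 2^{t_i} \log^* p)$, giving the recurrence $t_{i+1} \ge t_i \, 2^{t_i}$.

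Next I would observe that this recurrence grows as a tower of exponentials. Since $\log t_{i+1} \ge t_i$, a straightforward induction on $\log^*$ yields $\log^* t_{i+1} \le \log^* t_i - 1 + O(1)$ once $t_i$ is bounded away from $1$; equivalently, after $O(\log^* \log p) = O(\log^* p)$ rounds one has $t_i \ge \log p$. At that point $t_i 2^{t_i} \ge \log p$, so the $\max$ in Lemma~\ref{unachievedSplittersExponentialDecrease} reduces to $p/(\log p \log^* p)$, and a single additional sample-sort-style round achieves the last handful of splitters. Combining this with the initial $t_0 = 1$ case and the $O(1)$ terminating round gives the desired $O(\log^* p)$ total.

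The remaining concern is the probability bookkeeping. Each invocation of Lemma~\ref{unachievedSplittersExponentialDecrease} fails with probability at most $e^{-\Omega(p / (\log p \log^* p))}$, and the final cleanup round fails with comparably small probability. Since there are only $O(\log^* p)$ rounds in Phase 2, a union bound over rounds still gives an overall failure probability of $O(\log^* p) \cdot e^{-\Omega(p / (\log p \log^* p))}$, which is negligible (i.e.\ $p^{-\omega(1)}$) for large $p$. This makes the w.h.p.\ claim uniform across the phase.

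The main obstacle, I expect, is presenting the tower-of-exponentials argument cleanly and verifying the $t \ge 1$ hypothesis of Lemma~\ref{unachievedSplittersExponentialDecrease} is preserved at every step; everything else is a direct chaining of the previous lemma with a union bound.
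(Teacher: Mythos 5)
Your proof is correct and takes essentially the same approach as the paper's, which simply notes that iterating Lemma~\ref{unachievedSplittersExponentialDecrease} from $t=1$ leaves at most $p/((2\uparrow\uparrow k)\log^* p)$ unachieved splitters after $k$ rounds, hence $O(\log^* p)$ rounds suffice; you spell out the $t\mapsto t2^t$ tower recurrence and the union bound more explicitly. One small slip: the displayed inequality should read $\log^* t_{i+1} \ge \log^* t_i + 1$ (since $\log t_{i+1}\ge t_i$ and $t_i$ is increasing), not $\log^* t_{i+1}\le \log^* t_i - 1 + O(1)$, though your stated conclusion that $t_i\ge\log p$ after $O(\log^*\log p)$ rounds is correct.
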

\begin{proof}

The number of undetermined splitters at the beginning of phase 2 is at most $\frac{p}{\log^* p}$. By Lemma~\ref{undeterminedSplittersExponentialDecrease}, after $k$ iterations, at most\footnote{$\uparrow \uparrow$ denotes tetration, the inverse operation of $\log^*$.} $\frac{p}{(2 \uparrow \uparrow k) \log^* p}$  undetermined splitters remain w.h.p.. Therefore, after $O(\log^* p)$ iterations in phase 2, all splitters are determined w.h.p.. If the number of undetermined splitters becomes smaller than $p/(\log p \log^* p)$ at any point, then all splitters are determined after one more round using the analysis of HSS for one round~\cite{hssSPAA}.
\end{proof}


We can now combine the iteration bounds for the two phases to conclude with our main theorem.

\begin{theorem}\label{new-algorithm bound}
The Histogram Partitioning algorithm with $\epsilon=1$ finds a $2$-balanced partition in $O(\log^* p)$ BSP steps (and an additional $O(\log(\frac{1}{\epsilon}))$ steps for any $\epsilon > 0$) and $O(p)$ total communication w.h.p..
\end{theorem}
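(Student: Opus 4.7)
The plan is to assemble the theorem directly from Lemmas~\ref{phase 1 bound} and~\ref{phase 2 bound}, since together they already account for all rounds of Histogram Partitioning; what remains is to verify the communication count and the quality of the partition. First I would note that by definition, at the start of the algorithm all $p-1$ splitters are unachieved, so the algorithm begins in Phase~1. Lemma~\ref{phase 1 bound} tells us that after $O(\log^* p)$ rounds the number of unachieved splitters drops below $p/\log^* p$ w.h.p., at which point we enter Phase~2; then Lemma~\ref{phase 2 bound} gives that an additional $O(\log^* p)$ rounds suffice to achieve all remaining splitters w.h.p. Taking a union bound over these two events yields a total of $O(\log^* p)$ BSP supersteps w.h.p.

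Next I would argue the $O(p)$ total communication bound. Per round, the sampling ratio is $s_j = \frac{3p}{|\gamma_j|\log^* p}$, so the expected number of samples is $\frac{3p}{\log^* p}$, independent of $\gamma_j$. Summed over the $O(\log^* p)$ rounds this is $O(p)$ in expectation, and a Chernoff bound on the total sample count (applied either per round, using $\mathbb{E}[X_j]\geq \Omega(p/\log^* p)$, or globally) turns this into an $O(p)$ high-probability bound on the communication volume for the sample-gathering and histogram-reduction steps. Broadcasting splitter interval updates or valid splitters contributes at most $O(p)$ additional words per round since there are only $p-1$ splitters, but to keep the total at $O(p)$ I would rely on the observation already made in Section~\ref{HP alg} that each processor can redundantly recompute the updated intervals $\gamma_{j+1}$ from the broadcast sample/histogram itself, so no extra interval broadcast is needed. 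A final union bound combines the Phase~1 success, Phase~2 success, and sample-size concentration.

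The last step is to verify that what the algorithm produces is a $2$-balanced partition in the sense of the definition. Here I would unpack the definition of ``achieved'': once every splitter $s_k$ has been achieved, the corresponding sample key has rank in $[\tfrac{Nk}{p}, \tfrac{Nk}{p}+\tfrac{N}{p}]$, so consecutive selected splitters have rank gap at most $\tfrac{2N}{p}=(1+\epsilon)\tfrac{N}{p}$ with $\epsilon=1$, matching the $(1+\epsilon)$-balanced condition.

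The step I expect to need the most care is making the high-probability statements compose correctly: Lemmas~\ref{phase 1 bound}, \ref{unachievedSplittersExponentialDecrease}, and \ref{phase 2 bound} each state a per-round or per-phase failure probability, and one must apply a union bound over $O(\log^* p)$ rounds without the failure probability accumulating badly. Since each per-round failure probability given in the lemmas is of the form $\exp(-\Omega(p/\log^* p))$ or $\exp(-\Omega(p/(\log p \log^* p)))$, multiplying by the $O(\log^* p)$ round count still gives a negligible total failure probability for all $p$ sufficiently large, which is the key fact to check carefully. Everything else is bookkeeping.
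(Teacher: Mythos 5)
Your proof is correct and follows essentially the same route as the paper: combine Lemmas~\ref{phase 1 bound} and~\ref{phase 2 bound} for the $O(\log^* p)$ round count, then account for $O(p/\log^* p)$ communication per round. You go slightly further than the paper's own argument by making explicit the Chernoff concentration of the total sample size (the paper only notes the expectation), by spelling out why ``all splitters achieved'' implies a $2$-balanced partition, and by flagging the union-bound bookkeeping across the w.h.p.\ events---all of which the paper leaves implicit but which are correct additions.
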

\begin{proof}

Combining Lemma~\ref{phase 1 bound} and Lemma~\ref{phase 2 bound}, we can assert that Histogram Partitioning determines all splitters after $\log^* p$ iterations, w.h.p..
The communication in each round of the algorithm consists of gathering a sample of size $O(\frac{p}{\log^* p})$ and computing a histogram via a reduction of size $O(\frac{p}{\log^* p})$.
The bounds defining $\gamma_j$ may be updated locally (redundantly on all processors) or with  $O(\frac{p}{\log^* p})$ communication of distinct updated bounds.
Consequently, the communication cost of the algorithm in the BSP model is $O(p)$. For any $\epsilon > 0$, refer to the analysis in ~\cite{hssSPAA}.

\end{proof}


\section{Partitioning Round Complexity Lower Bound} \label{lowerBound}

In this section, we prove that any randomized algorithm for finding a $2$-balanced partition, that communicates $O(p)$ total keys per round, requires $\Omega(\log^*p)$ rounds to achieve a constant probability of success.
Our result implies that any algorithm needs at least $\Omega(\log^*p)$ rounds to find a $\Delta$-balanced partition with $O(p)$ samples per round, thus giving us the $\Omega(\log^*p)$ sample round complexity lower bound for partitioning with $O(p)$ samples per round.

The proof is structured as follows (see Figure~\ref{fig:lb_outline} in the Appendix for an overview),
\begin{itemize}[leftmargin=*]
\item we use Yao's principle to argue that the probability that for the worst-case input, any randomized algorithm cannot find a $2$-balanced partition is greater than the probability that the best deterministic algorithm cannot find a $2$-balanced partition for any randomized input distribution;
\item we propose a randomized input distribution that we show is hard for any deterministic partitioning algorithm; 
\item to prove the hardness of our proposed input distribution, we invoke a theorem from the distribution theory of runs to upper bound the amount of progress that can be made in each round by any deterministic algorithm;
\item we use an inductive argument to give a concrete lower bound for the probability that a certain number of splitters remain undetermined in each round given $T = \frac{\log^* p}{6}$ total rounds;
\item finally, we use an union bound on the probabilities for each round to argue that with high probability, $2$-balanced partition cannot be determined for large enough $p$. 
\end{itemize}

Our lower bound proof can also characterize all deterministic algorithms that use the information of $O(p)$ samples (for any way of choosing the samples) at each round.
We restrict our attention to asymptotic analysis and assume the number of keys ($N$) is large.

\subsection{Complexity Bounds from Sampling Bounds}

We now show that it suffices to consider deterministic histogram-based algorithms on a particular distribution of inputs.
\subsubsection{Reduction to Histogram-based Algorithms}

We start by showing that, having restricted to communication of $O(r)$ keys per round, comparison-based algorithms cannot perform significantly better than histogram-based algorithms (for some choice of keys to histogram at each round).
First, we show that obtaining a 2-balanced partition is essentially as hard as the global-partitioning obtained by Histogram sort defined below.
\begin{definition} \label{globalPartitionDefinition}
A partition $(s_0 \leq s_1 \leq s_2 ... \leq s_p)$ is $(1+\epsilon)$-globally-balanced if $R(s_{i}) \in [i\frac{N}{p},  (i+\epsilon)\frac{N}{p}]\ \ \forall i\in \{1,\ldots, p-1\}.$ 
\end{definition} Note the difference from a $(1+\epsilon)$-balanced partition defined in Definition~\ref{partitionDefinition}. A $(1+\epsilon)$-globally-balanced partition is also a $(1+\epsilon)$-balanced partition.

\begin{lemma}
\label{lem:onetoall_to_sample}
Given a distribution of inputs, let $T(p)$ be the expected round complexity of the best deterministic comparison-based 2-balanced partitioning BSP algorithm that communicates at most $r \geq p$ keys in each round.
Then, there exists a 2-globally-balanced partitioning algorithm with expected round complexity $T(2p)$ that communicates $2r$ keys per round. 
\end{lemma}
\begin{proof}
We may obtain a 2-globally-balanced partitioning by performing 2-balanced partitioning with $2p$ virtual processors, with each of $p$ processors simulating 2 virtual processors.
In a 2-balanced partitioning with $2p$ processors, no processor may be assigned more than $\frac{N}{p}$ elements, hence the $2p-1$ splitters defining the partitioning must include a splitter from each of the $p$ intervals, which together yield a 2-globally-balanced partitioning.
Execution with $2p$ processors can be done in $T(2p)$ rounds and at most $2r$ communication per round (ensuring the theorem assumption, $2r \geq 2p$), yielding a round complexity of $T(2p)$.
\end{proof}
Next, we show that, in our setting, any algorithm that communicates a given set of keys might as well compute their histogram (global positions), and hence it suffices to consider histogram-based algorithms to derive a round complexity lower bound.
We note that a sample-histogram round requires 2 BSP supersteps (one to collect the sample and one to reduce the histogram).
\begin{lemma}
\label{lem:onetoall_to_sample2}
Given a distribution of inputs, the round complexity of any deterministic comparison-based 2-globally-balanced partitioning BSP algorithm that communicates $O(p)$ keys in each step is no better than the number of sample-histogram rounds needed by a Histogram sort that computes the position of the same $O(p)$ keys in each round.
\end{lemma}
\begin{proof}
Consider the set of comparisons between keys computed at each round by the given algorithm.
Aside from comparisons of pairs of keys assigned to the same processor initially, these comparisons must involve only previously communicated keys.
Histogram sort computes the position in the global order of all communicated keys, which amounts to performing comparisons between all keys and the communicated keys.
Consequently, the given algorithm can infer that a 2-globally-balanced partitioning has been found only if at a given round, the previously communicated keys include a key from each interval, in which case Histogram sort also completes.
\end{proof}

\subsubsection{Reduction to Sampling Algorithms}

Yao's principle allows us to obtain a lower bound on the number of rounds needed by any randomized partitioning algorithm for the worst-case input, by reduction to the expected round complexity of the best deterministic algorithm for a randomized distribution of inputs.
We first state Yao's principle in general form then describe our case as a corollary.
\begin{lemma}[Yao's principle]
\label{YaoPrinciple}
Let $F_T$ denote the space of possible inputs for a problem, $\mathcal{R}_T$ denote the distribution of the outcomes of any randomized algorithm, $A_T$ denote the space of possible deterministic algorithms for the problem, and $\mathcal{D}_T$ denote the input distribution. By Yao's principle, for any cost function $\Theta$ that is a real-valued measure of a deterministic algorithm on an input, we have\footnote{
   Here $\mathbb{E}_{R \leftarrow \mathcal{R}_T}$ denotes the expectation with random variable $R$ taken over distribution $ \mathcal{R}_T$.
}
 \begin{equation}
     \max_{f \in F_T} \mathbb{E}_{R \leftarrow \mathcal{R}_T}[\Theta(R, f)] \geq \min_{A \in A_T} \mathbb{E}_{f \leftarrow \mathcal{D}_T}[\Theta(A, f)]
.\end{equation}
\end{lemma}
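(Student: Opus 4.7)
The plan is to prove this standard form of Yao's minimax principle by the classical averaging/Fubini argument, viewing any randomized algorithm as a probability distribution over deterministic ones and then exchanging the roles of ``worst-case input'' and ``average input drawn from $\mathcal{D}_T$''.

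First I would fix the interpretation of $\mathcal{R}_T$: a randomized algorithm can be specified by a random tape together with a deterministic response to each tape, so the outcome distribution $\mathcal{R}_T$ is equivalently a distribution over deterministic algorithms $A \in A_T$. Under this reading, for any fixed input $f$, $\mathbb{E}_{R \leftarrow \mathcal{R}_T}[\Theta(R,f)] = \mathbb{E}_{A \leftarrow \mathcal{R}_T}[\Theta(A,f)]$. I would then apply the elementary inequality $\max_{f \in F_T} g(f) \geq \mathbb{E}_{f \leftarrow \mathcal{D}_T}[g(f)]$, valid for any distribution $\mathcal{D}_T$ on $F_T$, to the function $g(f) = \mathbb{E}_{A \leftarrow \mathcal{R}_T}[\Theta(A,f)]$, obtaining
\[
\max_{f \in F_T} \mathbb{E}_{A \leftarrow \mathcal{R}_T}[\Theta(A,f)] \;\geq\; \mathbb{E}_{f \leftarrow \mathcal{D}_T} \mathbb{E}_{A \leftarrow \mathcal{R}_T}[\Theta(A,f)].
\]
Because the random choices of $A$ and $f$ are independent, Fubini (equivalently, linearity of expectation) lets me swap the order of the two expectations on the right-hand side without changing the value.

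For the final step, for each deterministic $A \in A_T$ the trivial bound $\mathbb{E}_{f \leftarrow \mathcal{D}_T}[\Theta(A,f)] \geq \min_{A' \in A_T} \mathbb{E}_{f \leftarrow \mathcal{D}_T}[\Theta(A',f)]$ holds, and since the right-hand side is independent of $A$, averaging this inequality over $A \sim \mathcal{R}_T$ preserves it. Chaining the three inequalities yields the claimed bound. There is essentially no hard step here; the only subtlety is the initial identification of a randomized algorithm with a distribution over deterministic ones, which becomes immediate once randomness is modeled via an explicit random tape. In the applications that follow, $\Theta(A,f)$ will be taken to be the indicator that $A$ fails to produce a $2$-balanced partition on $f$, so this lemma will allow the subsequent lower-bound argument to focus on a \emph{deterministic} algorithm facing the random input distribution from Construction~\ref{inputDistribution}.
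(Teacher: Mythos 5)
The paper states Lemma~\ref{YaoPrinciple} as a known result and does not supply a proof; it only proves the corollary that specializes $\Theta$ to an indicator. Your argument is the standard derivation of Yao's principle and is correct: identify the randomized algorithm with a distribution over deterministic algorithms via the random tape, bound $\max_{f}$ by $\mathbb{E}_{f\sim\mathcal{D}_T}$, swap the two expectations by Fubini/Tonelli (both are finite or the cost is nonnegative, so the exchange is licit), and then bound the inner expectation by $\min_{A}$. The only stylistic remark is that ``independent'' is doing a little work where what is really meant is that $(A,f)$ is drawn from a product measure, which is exactly what Fubini needs; you may want to say that explicitly. Your closing observation about specializing $\Theta$ to the failure indicator matches precisely how the paper uses the lemma in Corollary~\ref{YaoPrincipleCoro}.
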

\begin{corollary}
\label{YaoPrincipleCoro}
Consider the problem of finding $2$-partition via a comparison-based BSP algorithm that communicates $O(p)$ total keys per round. Consider a given input distribution $\Tilde{\mathcal{D}}_T$.
Also, let $X(R, f)$ define the event that $R$ cannot find $2$-partition for $f$ in $T$ rounds, and let $X(A, f)$ define the event that $A$ cannot find $2$-partition for $f$ in $T$ rounds. Then, we have 
 \begin{equation}
     \max_{f \in F_T} \Pr_{R \leftarrow \mathcal{R}_T}[X(R, f)] \geq \min_{A \in A_T} \Pr_{f \leftarrow \Tilde{\mathcal{D}}_T}[X(A, f)]
.\end{equation}
\end{corollary}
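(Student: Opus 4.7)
The plan is to derive the corollary as an immediate specialization of Lemma~\ref{YaoPrinciple} by picking a particular cost function and a particular input distribution. Specifically, I would choose the cost function to be the indicator of failure, $\Theta(A,f) := \mathbb{1}[X(A,f)]$, which takes value $1$ whenever the deterministic algorithm $A$ fails to produce a $2$-balanced partition on input $f$ within $T$ supersteps and $0$ otherwise. Since this function is $\{0,1\}$-valued, it is trivially real-valued and thus admissible for Lemma~\ref{YaoPrinciple}. Under the standard view of a randomized algorithm as a distribution over deterministic algorithms, the same definition extends naturally to $\Theta(R,f) := \mathbb{1}[X(R,f)]$, and for any event $X$ one has $\mathbb{E}[\mathbb{1}[X]] = \Pr[X]$.

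Next, I would substitute the specific input distribution $\Tilde{\mathcal{D}}_T$ from Construction~\ref{inputDistribution} for the generic distribution $\mathcal{D}_T$ appearing on the right-hand side of Lemma~\ref{YaoPrinciple}. This substitution is valid because Lemma~\ref{YaoPrinciple} holds for any input distribution; in particular, lower-bounding the minimum over all deterministic algorithms of the expected failure probability against $\Tilde{\mathcal{D}}_T$ gives a valid (potentially weaker) bound than taking the supremum of $\mathcal{D}_T$ would. Converting the two expectations via $\mathbb{E}[\mathbb{1}[X]] = \Pr[X]$ on both sides yields exactly the inequality asserted in the corollary.

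The only subtlety worth flagging is the identification between the two standard models of a randomized algorithm: (i) an algorithm that consumes a string of random bits, and (ii) a probability distribution $\mathcal{R}_T$ over deterministic algorithms. Under identification (ii), $\Pr_{R \leftarrow \mathcal{R}_T}[X(R,f)] = \mathbb{E}_{R \leftarrow \mathcal{R}_T}[\mathbb{1}[X(R,f)]]$, which is precisely the quantity that Lemma~\ref{YaoPrinciple} bounds. Since every randomized algorithm in model (i) induces such a distribution over deterministic executions (by fixing the random tape), the two viewpoints are equivalent for the purposes of this argument. With that identification in place, the corollary follows immediately, so I do not expect a genuine obstacle in carrying out this step; the real technical work of the lower bound will come later, when analyzing the deterministic-versus-$\Tilde{\mathcal{D}}_T$ game using the distribution theory of runs.
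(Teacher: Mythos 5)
Your proof is correct and takes the same approach as the paper: specialize Lemma~\ref{YaoPrinciple} with the indicator cost $\Theta(\cdot,f)=\mathbb{1}[X(\cdot,f)]$ and the fixed distribution $\Tilde{\mathcal{D}}_T$, then use $\mathbb{E}[\mathbb{1}[X]]=\Pr[X]$. If anything, your write-up is more careful than the paper's one-line proof, which (likely by a slip of wording) appears to assign cost $0$ to the failure event $X$ and $1$ otherwise --- the reverse of what is needed --- whereas you correctly set $\Theta=1$ on failure.
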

\begin{proof}
After setting cost of achieving $X(R, f)$ and $X(A, f)$ as $0$, and $1$ otherwise, the corollary trivially follows from Lemma~\ref{YaoPrinciple}.
\end{proof}
As a result, we only need to show that $\min_{A \in A_T} \Pr_{f \leftarrow \Tilde{\mathcal{D}}_T}[X(A, f)]$, which is a probabilistic lower bound for any deterministic algorithm on our input distribution, is large enough.
Such a bound gives us a probabilistic lower bound for the original problem.

\begin{figure}
\begin{center}
\includegraphics[width=1.\linewidth]{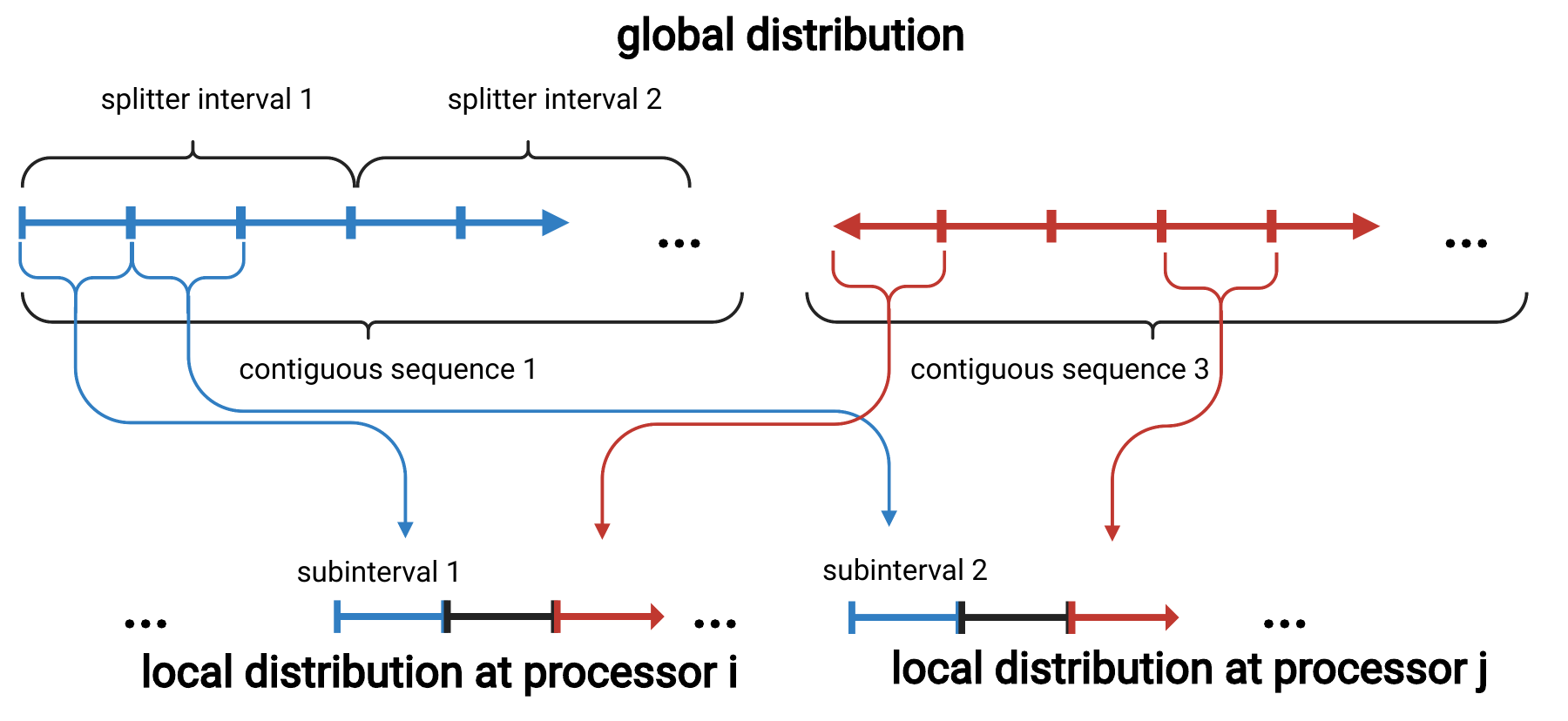}
\caption{The input distribution we use to prove hardness. The distribution is obtained by assuming a global sorted order of input keys (at the top) and then constructing the local distribution at each process from the global order (below).} \label{fig:inputDist}
\end{center}
\end{figure}

We choose the input distribution $\Tilde{\mathcal{D}}_T$ by partitioning the global order of keys into $C$ parts,
each containing $\frac{p}{C}$ consecutive intervals, and assigning each processor a single random subinterval from each part.
The distribution (Figure~\ref{fig:inputDist}) is thus a 3-level partition of input into $C$ parts, containing a total of $C \cdot \frac{p}{C}$ intervals, and $C \cdot \frac{p}{C} \cdot C$ subintervals.
We describe the process to generate this distribution in detail below (illustrated in Figure~\ref{fig:inputDist}).
\begin{construction}[input distribution]
\label{inputDistribution}
Given a sorted input of $N$ keys, we follow the steps below to assign the global input to each processor, which gives us the input distribution:
\begin{enumerate}
    \item divide the global input evenly into $p$ intervals;
    \item divide the $p$ intervals into $C$ parts with each part containing $\frac{p}{C}$ intervals (a part is just a sequence of intervals);
    \item divide each interval evenly into $C$ subintervals so that each part has $p$ subintervals; 
    \item for every part, assign exactly one subinterval randomly to each processor.
\end{enumerate}
\end{construction}
The merit of the chosen distribution is that each processor owns keys from a single random interval in that part.
Consequently, if the part is unsampled (the global positions of all keys within it are unknown), a given processor may be able to tell which of its keys belong to the part, but cannot tell which interval in the part they belong to.
More generally, we refer to sequences of unsampled intervals within a single part as a \textbf{contiguous sequence}.
\begin{definition}[contiguous sequence]
\label{contiguousSequence}
 A contiguous sequence is defined to be a contiguous sequence of unsampled intervals of input in the global order such that any processor contains at most one subinterval from the continguous sequence in its local input. 
\end{definition}
We note that in the first round, every part is a contiguous sequence of $\frac{p}{C}$ intervals ($p$ subintervals).
The next lemma shows that our distribution ensures that sampling in different contiguous intervals is independent.
\begin{lemma}
\label{LemmaDistProp}
At any round, for any deterministic histogram-based algorithm, subintervals in each contiguous sequence have the same probability to be sampled by the algorithm. Consequently, intervals in each contiguous sequence have the same probability to be sampled.
\end{lemma}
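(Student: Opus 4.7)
The plan is to prove the claim by establishing that within each contiguous sequence, the unsampled subintervals are \emph{exchangeable} from the algorithm's point of view, and then observing that a deterministic algorithm's sampling decisions must be symmetric with respect to exchangeable unobserved objects. The proof proceeds by induction on the superstep index.

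First, I would verify the base case. After Construction~\ref{inputDistribution}, each part is itself a contiguous sequence (by construction, every processor owns exactly one subinterval per part), and the assignment of subintervals to processors within a part is a uniformly random bijection. Hence the joint distribution of processor-ownership is invariant under any permutation of the $p$ subintervals in the part; equivalently, from the algorithm's point of view before any sampling, all $p$ subintervals of a part are exchangeable.

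For the inductive step, I would show that if, at the start of superstep $i$, the subintervals within each contiguous sequence are exchangeable conditional on the samples and histogram values observed so far, then the same holds at the start of superstep $i+1$. The key observation is that, because a contiguous sequence $S$ places at most one subinterval per processor, any transposition of two unsampled subintervals in $S$ amounts to swapping ownership between two distinct processors and leaves the joint local-input distribution invariant. A deterministic algorithm's choice of which keys to sample in superstep $i$ is a fixed function of the observed history, and hence is symmetric in the unsampled subintervals of $S$. After sampling, either $S$ is unchanged (no sample landed in it) or $S$ is split and/or trimmed into new shorter contiguous sequences (each still satisfying the ``at most one subinterval per processor'' property), and within each resulting piece the residual subintervals remain exchangeable.

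Given this exchangeability, the probability that the algorithm samples a given unsampled subinterval in a contiguous sequence at the next superstep depends only on quantities symmetric in the unsampled subintervals of the sequence, so these probabilities must be equal. The ``consequently'' clause then follows immediately: each unsampled interval in a contiguous sequence contributes the same number $C$ of subintervals to the sequence, so by symmetry and a union bound over these $C$ subintervals, each interval is equally likely to be sampled. The main obstacle will be carefully formalizing the conditioning in the inductive step: one must argue that, after the histogram reveals the global ranks of the sampled keys, the remaining randomness in processor-to-subinterval assignments still acts transitively on the unsampled subintervals of each contiguous sequence. This is precisely where the ``at most one subinterval per processor'' condition from Definition~\ref{contiguousSequence} is essential, since it guarantees that a transposition of two unsampled subintervals can be realized by swapping two processors' assignments without affecting any previously revealed sample.
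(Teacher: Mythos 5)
Your proof is correct and takes essentially the same approach as the paper: both rest on the observation that transposing two unsampled subintervals of a contiguous sequence between the processors that own them preserves every processor's local order and all previously revealed histogram ranks, so a deterministic comparison-based algorithm cannot break the resulting symmetry; you simply make this precise via an induction on supersteps where the paper asserts it in one pass. One small terminological slip: the step from subintervals to intervals is not a ``union bound'' but an application of exchangeability to the joint law of the $C$ sampling indicators of each interval's subintervals, which gives exact equality (not merely a bound) of $\Pr[\text{at least one of the } C \text{ subintervals is sampled}]$ across intervals.
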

\begin{proof}
Given the input distribution above, for any contiguous sequence, because each processor contains at most one subinterval and no matter how we interchange these subintervals among processors they would still appear in the same position of the their respective local orderings, for any deterministic algorithm, processors cannot distinguish one subinterval from other subintervals in any assigned contiguous sequence. Moreover, since subintervals in each contiguous sequence are distributed randomly, they are sampled with the same probability. Therefore, intervals in each contiguous sequence are sampled with the same probability.
\end{proof}

The lemma above essentially claims that throughout $T$ rounds, for any contiguous sequence, processors cannot infer any information globally about that contiguous sequence from its local input. Thus, for any deterministic algorithm, at each round, processors can only make the  decision of whether to sample a random subinterval in a given contiguous sequence. Therefore, any deterministic algorithm is equivalent to some sampling strategy of distributing samples for different contiguous sequences.

\subsection{Bound for One Round/BSP Superstep}



We now derive a sample-size lower bound for 2-balanced partitioning for single round partitioning algorithms, which yields a lower bound on the sample size needed by the sample sort algorithm.
We first characterize the probability that a given interval in a contiguous sequence is unsampled, then use this to bound the sample size needed to ensure all intervals in one or more contiguous sequences are sampled.
We restrict our analysis of algorithms to what contiguous sequences they choose to sample from, as we have established (1) in Lemma~\ref{lem:onetoall_to_sample2}, that no comparison-based algorithm that communicates $O(p)$ total keys per round can complete in fewer steps than the minimum rounds needed by a histogram-based scheme, and (2) in Lemma~\ref{LemmaDistProp} that with our input distribution, processors cannot distinguish intervals within the same contiguous sequence.

\begin{lemma}
\label{LemmaOneInterval}
Given $KQ$ samples (across all processors) from a contiguous sequence with $Q$ intervals, the probability that any given interval in the contiguous sequence is unsampled is at least $e^{-4K}$, if $\frac{K}{C}\leq 1/4$ and $Q\geq 2$.
\end{lemma}
\begin{proof}
By Lemma~\ref{LemmaDistProp}, the sampling strategy of any deterministic algorithm for one contiguous sequence is equivalent to random sampling of subintervals in the sequence.
Let $X_i$ be the event that at least one subinterval in interval $i$ is sampled.
$X_i$ does not occur if and only if each subinterval from interval $i$ falls in the set of $CQ-KQ$ unsampled subintervals from the $CQ$ total subintervals in the contiguous sequence
(we assume, without loss of generality, that all samples are chosen from distinct subintervals).
We then compute the probability $1-\Pr[X_i]$ that the $C$ subintervals in interval $i$ are unsampled,
 which can be formulated as \begin{equation}
\begin{aligned}
          1-\Pr[X_i] 
           &=\prod_{i=0}^{C-1} \frac{CQ-KQ-i}{CQ-i} 
           \geq \bigg(\frac{CQ-KQ-C}{CQ-C}\bigg)^C \\
           &= \big(1-\frac{K}{C}\frac{Q}{Q-1}\big)^C 
           \geq \big(1-\frac{2K}{C}\big)^C \geq e^{\frac{-4KC}{C}} = e^{-4K},
\end{aligned}
\end{equation}
where we replace each term in the product with the lower bound and use $e^{-x} \leq 1 - x + \frac{x^2}{2} \leq 1 - \frac{x}{2} $ if $0 \leq x \leq 1$.
\end{proof}


Having obtained a bound on the probability of an interval remaining unsampled, we now bound the sample size needed to sample all unsampled intervals with a given probability.
\begin{lemma}
\label{LemmaContSeq}
Given $D$ contiguous sequences, each containing $Q$ unsampled intervals, any deterministic algorithm that samples all unsampled intervals in one round, with probability at least $\rho$, needs to sample $QKD$ keys in total, 
where $ K \geq \frac{1}{4}\log \Big(\frac{DQ}{\log(\frac{1}{\rho})}\Big)$ assuming $\frac{K}{C} \leq 1/4$ and $Q \geq 2$.
 \end{lemma}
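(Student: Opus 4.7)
The plan is to show that for any way the algorithm allocates its $QKD$ total samples across the $D$ contiguous sequences, the probability that every one of the $DQ$ unsampled intervals is hit is at most $(1-e^{-4K})^{QD}$, and then to invert this inequality using $\log(1-x)\le -x$.

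Fix a deterministic algorithm and let $k_j Q$ be the number of samples it places in contiguous sequence $j$, so that $\sum_{j=1}^D k_j = KD$. By Lemma~\ref{LemmaDistProp}, the behaviour of the algorithm within sequence $j$ is equivalent to uniformly random sampling of $k_j Q$ subintervals out of $CQ$; Lemma~\ref{LemmaOneInterval} then gives that each individual interval in sequence $j$ is unsampled with probability at least $e^{-4k_j}$. The events $A_i=\{\text{interval $i$ is sampled}\}$ for $i=1,\dots,Q$ within a single sequence are monotone increasing functions of disjoint coordinate sets of the negatively associated family of sampling indicators (uniform sampling without replacement of $k_j Q$ subintervals out of $CQ$), so
\[
\Pr\Bigl[\bigcap_{i=1}^Q A_i\Bigr] \;\le\; \prod_{i=1}^Q \Pr[A_i] \;\le\; (1-e^{-4k_j})^Q .
\]

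Across distinct contiguous sequences the random assignments of subintervals to processors are independent by Construction~\ref{inputDistribution}, so the probability of globally sampling every interval factorises and is bounded by $\prod_{j=1}^D (1-e^{-4k_j})^Q$. Since $k\mapsto \log(1-e^{-4k})$ is concave (its second derivative $-16 e^{4k}/(e^{4k}-1)^2$ is negative), Jensen's inequality shows this product is maximised over the simplex $\sum_j k_j = KD$ at the uniform allocation $k_j = K$, yielding the clean upper bound $(1-e^{-4K})^{QD}$ on the algorithm's success probability. Imposing $(1-e^{-4K})^{QD}\ge \rho$ and applying $\log(1-e^{-4K})\le -e^{-4K}$ gives $QD\,e^{-4K}\le \log(1/\rho)$, which rearranges to $K\ge \tfrac{1}{4}\log\bigl(DQ/\log(1/\rho)\bigr)$, as claimed.

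The main obstacle is the negative-association step: without it, the easy argument of focusing on only one interval per sequence would lose a factor of $Q$ inside the logarithm, and the stronger factor $DQ$ requires the joint product bound inside each sequence. A minor piece of bookkeeping is that the uniform allocation $k_j = K$ also satisfies the hypothesis $K/C\le 1/4$ (together with $Q\ge 2$) needed to apply Lemma~\ref{LemmaOneInterval}, so the worst-case allocation falls inside the regime where the single-interval lower bound is valid and no edge-case splitting is required.
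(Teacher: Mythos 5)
Your proposal follows the same two-stage structure as the paper's proof: first bound the within-sequence success probability by a product over intervals, then optimize over allocations of samples to contiguous sequences and conclude that the uniform allocation $k_j=K$ is worst for the lower bound. The paper handles the within-sequence step by directly asserting $\Pr[X_i \mid X_1,\dots,X_{i-1}]\le\Pr[X_i]$ via a counting argument and telescoping, whereas you invoke negative association of sampling-without-replacement indicators (plus monotonicity and disjoint coordinate sets); these are equivalent in substance but your phrasing is the standard, cleaner route. Likewise, the paper finds the maximizer by setting partial derivatives to zero under the simplex constraint, while you argue concavity of $k\mapsto\log(1-e^{-4k})$ and apply Jensen; again, same conclusion, slightly tidier justification. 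One shared loose end worth noting: the per-sequence bound $\Pr_j[\text{all sampled}]\le(1-e^{-4k_j})^Q$ from Lemma~\ref{LemmaOneInterval} is only established under $k_j/C\le 1/4$, yet the maximization ranges over all nonnegative allocations summing to $KD$, including ones where an individual $k_j$ exceeds $C/4$. The paper glosses over this; you flag it as ``minor bookkeeping'' and observe the optimizer itself satisfies the constraint, but that observation alone does not rule out a feasible-but-out-of-regime allocation beating the in-regime bound. It does not change the verdict that your route is essentially the paper's, but if you wanted to tighten the argument you would need to either extend the per-sequence bound to all $k_j\in[0,C]$ or show the out-of-regime allocations are dominated by in-regime ones.
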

 \begin{proof}
 
We first characterize the case of one contiguous sequence, that is, $D=1$. Let $X_i$ be the event that interval $i$ is sampled, and we want to compute an upper bound for $\Pr[X_1, X_2, ..., X_Q]$.
We first show that  \begin{equation}
    \forall i, \Pr[X_i|X_1,...,X_{i-1}] \leq \Pr[X_i],
\end{equation}
which can be easily verified since the probability that for some interval $i^{th}$, $j^{th}$ subinterval in the interval is in one of the $CQ-KQ-(j-1)$ unsampled positions out of $CQ-(j-1)$ possible positions is less than the probability that the $j^{th}$ subinterval is in one of the $CQ-KQ-(j-1)$ unsampled positions out of $CQ-(j-1)-(i-1)$ possible positions since at least $i-1$ samples must be in $X_1...X_{i-1}$.

Then, from Lemma \ref{LemmaOneInterval}, we have
\begin{equation}
 \begin{aligned}
       \Pr[X_1, X_2...X_Q] &= \Pr[X_1]P[X_2|X_1]...\Pr[X_Q|X_1,...X_{Q-1}] \\
      &\leq \Pr[X_1]\Pr[X_2]...\Pr[X_Q] \leq \big(1- e^{-4K}\big)^Q \label{eq:onecontigb},
 \end{aligned}
 \end{equation}
 for one contiguous sequence.
 
For $D$ contiguous sequences, the distribution of samples among contiguous sequences must satisfy $Q\sum_{i=1}^D \kappa_i = QKD$, where we sample $\kappa_iQ$ samples from the $i^{th}$ contiguous sequence (note that the total number of samples is $QKD$). 
The optimal sampling strategy is then given by the solution to the optimization problem given by maximizing the probability of all intervals being sampled,
\begin{equation}
 \begin{aligned}
          \max_{K_1,...,K_{D}} & \prod_{i=1}^D \Pr_i[X_1,...,X_Q|\kappa_i]
          \quad \textrm{s.t.} \quad  \sum_{i=1}^D \kappa_i  = K D,  \text{ and }
          \forall i, \kappa_i \geq 0.
 \end{aligned}
 \end{equation}
We upper bound the maximum of the above problem, by using our analysis of a single contiguous sequence (Eq.~\ref{eq:onecontigb}), which gives an upper bound of the above objective function of $\prod_{i=1}^D \big(1- e^{-4\kappa_i}\big)^{Q}$.
By quantifying when this upper bound can reach $\rho$, we obtain a lower bound on sample size.
By standard application of the method of Lagrange multipliers, the above upper bound is maximized when all $\kappa_i=K$.
%
So, if the algorithm samples all unsampled intervals with probability $\rho$, then it must be that
\begin{equation}
 \begin{aligned}
                 \rho &\leq\Big(1- e^{-4K}\Big)^{DQ} \leq e^{-DQe^{-4K}},
 \end{aligned}
 \end{equation}
 which implies
\begin{equation}
    K \geq \frac{1}{4}\log \Big(\frac{DQ}{\log(\frac{1}{\rho})}\Big).
\end{equation}
 \end{proof}
With the above lemma, we now bound the sample-size neeeded to achieve a good partitioning in a single round (sample complexity of sample sort).
 \begin{theorem}
 \label{TheoremOne}
 Any randomized algorithm that achieves $2$-balanced partition, with probability greater than $\rho$, in a single round, requires $\Omega(p\log p)$ samples for large enough $p$. 
 \end{theorem}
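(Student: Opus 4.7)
The plan is to derive Theorem~\ref{TheoremOne} as a direct consequence of Corollary~\ref{YaoPrincipleCoro} and Lemma~\ref{LemmaContSeq}, plus a careful choice of the distribution parameter $C$ in Construction~\ref{inputDistribution}. First I would appeal to Yao's principle (in the form of Corollary~\ref{YaoPrincipleCoro}) to reduce the randomized lower bound to a deterministic one: it suffices to exhibit an input distribution on which every deterministic one-superstep algorithm fails to produce a $2$-balanced partition with probability at least $1-\rho$ whenever it uses fewer than $\Omega(p\log p)$ samples. I would use the distribution of Construction~\ref{inputDistribution} with parameter $C$ to be chosen below.

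Next I would translate ``finding a $2$-balanced partition in one superstep'' into ``sampling every interval.'' Since each interval in Construction~\ref{inputDistribution} carries exactly $N/p$ keys, any interval left unsampled leaves a rank-ambiguity of at least $N/p$, which is incompatible with $2$-balance (a missed splitter forces one bucket to be at least $2N/p$ in one of the two extreme placements consistent with the sample). So a successful one-superstep algorithm must, with probability $\geq\rho$, sample at least one subinterval from every one of the $p$ intervals.

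At the start of the first superstep there are exactly $D=C$ contiguous sequences, each of length $Q=p/C$ intervals, so the hypothesis $Q\geq 2$ of Lemma~\ref{LemmaContSeq} holds as soon as $C\leq p/2$. Writing the total sample count as $QKD=pK$ (so $K$ is the average number of samples per interval), Lemma~\ref{LemmaContSeq} yields
\begin{equation*}
K \;\geq\; \tfrac{1}{4}\log\!\Big(\tfrac{DQ}{\log(1/\rho)}\Big) \;=\; \tfrac{1}{4}\log\!\Big(\tfrac{p}{\log(1/\rho)}\Big),
\end{equation*}
provided the side condition $K/C \leq 1/4$ is met. I would satisfy this side condition by taking $C=\lceil\log p\rceil$ (or any $C=\Theta(\log p)$), so that $K=O(\log p)$ automatically gives $K/C=O(1)$, and the constants can be tuned so that $K/C\leq 1/4$ for all sufficiently large $p$. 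Multiplying through by $p$ yields a total sample requirement of at least $\tfrac{p}{4}\log(p/\log(1/\rho)) = \Omega(p\log p)$ for any constant $\rho>0$, which is the desired bound.

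The main obstacle, and where one must be careful, is the interaction between the constraints $K/C\leq 1/4$ and $Q\geq 2$ from Lemma~\ref{LemmaContSeq} and the asymptotic bound we wish to extract: if $C$ is chosen too small the side condition is violated precisely in the regime where $K$ becomes large, while if $C$ is chosen too large then $Q=p/C$ shrinks and the lemma becomes vacuous. The choice $C=\Theta(\log p)$ is the natural sweet spot, and verifying that it simultaneously keeps $Q\geq 2$, keeps $K/C\leq 1/4$, and preserves the $\log(p/\log(1/\rho))$ factor in the lower bound is the only genuinely delicate step; the rest is bookkeeping.
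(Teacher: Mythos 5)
Your proposal matches the paper's proof: both apply Yao's principle (Corollary~\ref{YaoPrincipleCoro}) to reduce to a deterministic algorithm on Construction~\ref{inputDistribution}, observe that a $2$-balanced partition requires every interval to be sampled, and then invoke Lemma~\ref{LemmaContSeq} with $D=C$, $Q=p/C$, so $DQ=p$, to conclude $K=\Omega(\log p)$ and hence $\Omega(p\log p)$ total samples. The only cosmetic difference is the choice of $C$ in the input distribution — you take $C=\Theta(\log p)$ while the paper takes $C=Q=\sqrt{p}$ — and both choices satisfy the lemma's side conditions ($Q\geq 2$ and $K/C\leq 1/4$, the latter either directly or by the observation that $K>C/4$ already yields the desired bound) for large enough $p$.
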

 \begin{proof}
By Lemma~\ref{lem:onetoall_to_sample}, if there exists a single round $2$-balanced partitioning algorithm with $r=O(p\log p)$ sample complexity for all $p$, then a single round algorithm with the same complexity also exists for $2$-globally-balanced-partitioning.
Further, by Lemma~\ref{lem:onetoall_to_sample2}, to obtain a round lower-bound, it suffices to consider algorithms that sample/histogram a set of keys in each round.
Finally, by Yao's principle (Corollary~\ref{YaoPrincipleCoro}), it suffices to consider the round complexity of the best deterministic algorithm on the input distribution defined in Construction~\ref{inputDistribution}.
By Lemma \ref{LemmaContSeq}, for any $\rho$, for large enough $p, C, Q$ (it suffices to let $C= Q= \sqrt{p}$), the number of samples needed to find a suitable $2$-globally-balanced partition is
\begin{equation}
 \begin{aligned}
            K &\geq\frac{1}{4}\log \Big(\frac{p}{\log(\frac{1}{\rho})}\Big).\\
 \end{aligned}
 \end{equation} 
So, the algorithm needs at least $K=\Omega(\log(p))$, or $r=\Omega(p\log(p))$ samples in total to achieve $2$-globally-balanced partition in one round.
By the above reductions, this result implies that any single round randomized algorithm for 2-balanced partitioning communicates $\Omega(p \log p)$ total keys.
 \end{proof}

\subsection{Bound for $\Omega(\log^*(p))$ Rounds/BSP Supersteps}

We now consider algorithms with multiple sample-histogram rounds, demonstrating that any BSP algorithm that communicates $O(p)$ total keys per round, requires $\Omega(\log^* p)$ rounds.
To derive our lower bound, we make use of a result from the distribution theory of runs (A.M. Mood, 1940~\cite{mood1940distribution}).
This result allows us to bound how many sequences of subintervals of a certain size remain unsampled in a contiguous sequence after a round of sampling.
\begin{definition}
Let $n^{(k)}=n\cdot(n-1)\cdot(n-2)\ldots (n-k+1)$, $n^{(0)} = 1$.
\end{definition}
\begin{lemma}[distribution theory of runs]
\label{LemmaDistributionTheory}
Consider random arrangements of $m$ element of two kinds, for example $m_1$ $a$'s and $m_2$ $b$'s such that $m_1 + m_2 = m$. Let $r_{1i}$ denote the total number of runs  of $a$'s of length $i$. Let $s_{1k} = \sum_{i=k}^{m_1} r_{1i}$ denote the total number of runs of $a$'s of length at least $k$. We have $\mathbb{E}[s_{1k}] = \frac{(m_2+1)m_1^{(k)}}{m^{(k)}}$ and $\Var[s_{1k}] = \frac{(m_2+1)^{(2)}m_1^{(2k)}}{m^{(2k)}} + \frac{(m_2+1)m_1^{(k)}}{m^{(k)}}\big(1-\frac{(m_2+1)m_1^{(k)}}{m^{(k)}}\big)$.
\end{lemma}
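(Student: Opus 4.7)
The plan is to represent $s_{1k}$ as a sum of indicators over the gaps that the $b$'s carve out in the arrangement, and then read off the mean and variance by exchangeability and simple composition counting. View the random arrangement as first placing the $m_2$ $b$'s, which split the $m_1$ $a$'s into $m_2+1$ (possibly empty) slots, and let $N_0,\ldots,N_{m_2}$ denote the slot sizes. A maximal run of $a$'s of length $\geq k$ is exactly a slot with $N_j\geq k$, so $s_{1k} = \sum_{j=0}^{m_2}\mathbf{1}[N_j\geq k]$. The joint distribution of $(N_0,\ldots,N_{m_2})$ is uniform on the $\binom{m_1+m_2}{m_2}$ compositions of $m_1$ into $m_2+1$ non-negative parts, so the coordinates are exchangeable.

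For the mean, I would count compositions with $N_0\geq k$ by the substitution $N_0'=N_0-k$, giving $\binom{m_1-k+m_2}{m_2}$ choices. Rewriting both binomials as falling factorials collapses $P(N_0\geq k)$ to $m_1^{(k)}/m^{(k)}$, and linearity of expectation then yields $\mathbb{E}[s_{1k}] = (m_2+1)\,m_1^{(k)}/m^{(k)}$, matching the claim.

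For the variance, I would expand $\Var[s_{1k}] = \sum_j \Var[\mathbf{1}[N_j\geq k]] + \sum_{i\neq j}\mathrm{Cov}[\mathbf{1}[N_i\geq k],\mathbf{1}[N_j\geq k]]$ and compute the pairwise joint probability $P(N_i\geq k, N_j\geq k) = m_1^{(2k)}/m^{(2k)}$ by the analogous shift-and-count argument applied to both coordinates (translate both $N_i$ and $N_j$ down by $k$ and count compositions of $m_1-2k$). Writing $p = m_1^{(k)}/m^{(k)}$, the $(m_2+1)^{(2)}$ ordered covariance terms contribute $(m_2+1)^{(2)}\,m_1^{(2k)}/m^{(2k)}$ from their joint-probability piece, which is precisely the first term of the stated formula, while the marginal variances and the $-p^2$ pieces of the covariances combine into $(m_2+1)p\bigl(1-(m_2+1)p\bigr)$, the second term.

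The main obstacle is a small but fiddly algebraic check: the $-(m_2+1)p^2$ arising from the marginal variances and the $-(m_2+1)m_2\,p^2$ coming from the covariances must add to exactly $-(m_2+1)^2 p^2$, so that the factor $1-(m_2+1)p$ in the statement emerges cleanly rather than as two separate pieces. Aside from that, the formulas extend without incident to degenerate regimes where $2k>m_1$ via the convention $m_1^{(j)}=0$ for $j>m_1$, so no case analysis is required.
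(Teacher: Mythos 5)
Your proof is correct. The paper does not prove this lemma at all---it simply cites equations (3.13) and (3.15) of Mood's 1940 paper---so you are supplying a self-contained argument where the paper defers to a reference. Your gap-decomposition approach is a clean one: the bijection between arrangements of $m_1$ $a$'s and $m_2$ $b$'s and compositions of $m_1$ into $m_2+1$ non-negative parts is exact, exchangeability of the slot sizes $(N_0,\ldots,N_{m_2})$ is immediate, and the falling-factorial cancellation
\[
\frac{\binom{m_1-k+m_2}{m_2}}{\binom{m_1+m_2}{m_2}}=\frac{m_1^{(k)}}{m^{(k)}},\qquad
\frac{\binom{m_1-2k+m_2}{m_2}}{\binom{m_1+m_2}{m_2}}=\frac{m_1^{(2k)}}{m^{(2k)}}
\]
gives the marginal and pairwise probabilities directly. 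The final algebraic recombination you flag as the ``fiddly'' step indeed works: with $p=m_1^{(k)}/m^{(k)}$ and $(m_2+1)^{(2)}=(m_2+1)m_2$ ordered pairs, the negative pieces sum as $-(m_2+1)p^2-(m_2+1)m_2p^2=-(m_2+1)^2p^2$, so
\[
\Var[s_{1k}]=(m_2+1)^{(2)}\frac{m_1^{(2k)}}{m^{(2k)}}+(m_2+1)p\bigl(1-(m_2+1)p\bigr),
\]
matching the statement. Mood's original derivation proceeds via combinatorial identities and factorial moments rather than an indicator decomposition over gaps, so your route is both more elementary and arguably more transparent about where the two terms of the variance come from; it also handles the degenerate cases $k>m_1$ or $2k>m_1$ uniformly through the convention $m_1^{(j)}=0$ for $j>m_1$, with no separate case analysis, which is a small but real advantage.
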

\begin{proof}
Refer to equation \textit{(3.13), (3.15)} in \textit{The Distribution Theory of Runs} \cite{mood1940distribution} by \textit{A. M. Mood}.
\end{proof}
Lemma~\ref{LemmaDistributionTheory} immediately allows us to bound the number of subsequences of unsampled intervals after a sample-histogram round on a given contiguous sequence.
\begin{corollary}
\label{CoroDistributionTheory}
After $K_iQ_i$ random subintervals are sampled from a contiguous sequence containing $CQ_i$ subintervals, the number of maximal subsequences of unsampled subintervals of size at least $n$, $\hat{X}_{i+1}$, 
satisfies $\mathbb{E}[\hat{X}_{i+1}] = \frac{(K_iQ_i+1){(CQ_i-K_iQ_i)^{(n)}}}{(CQ_i)^{(n)}}$ and $\Var[\hat{X}_{i+1}] = \frac{(K_i Q_i+1)^{(2)}((C-K_i)Q_i)^{(2n)}}{(CQ_i)^{(2n)}}+\mathbb{E}[\hat{X}_{i+1}](1-\mathbb{E}[\hat{X}_{i+1}])$.
\end{corollary}
\begin{proof}
This result follows directly from Lemma~\ref{LemmaDistributionTheory}, since we can treat each subinterval as an element and whether a subinterval is sampled as whether a element $b$ appears in the arrangement, and the problem of characterizing $\hat{X}_{i+1}$ can be thought of as the problem of characterizing the number of runs of $a$'s of size at least $n$ in the random arrangements of $CQ_i$ elements with ($CQ_i-K_iQ_i$) $a$'s and $K_i Q_i$ $b$'s.
\end{proof}

We now apply this bound in a setting with many contiguous subsequences.
We aim to bound the number of contiguous subsequences at the $i^{th}$ step (random variable denoted by $X_{i}$).
 \begin{definition}
Let 
     $\alpha_i(K_i) = \frac{K_i Q_i + 1}{e^{4K_i(Q_{i+1}+1)}}.$
 \end{definition}
%
\begin{lemma}
\label{LemmaOneStep}
Consider $D_i$ contiguous sequences, such that each contiguous sequence contains $Q_i$ intervals, and a total of at most $K_{i}D_iQ_i$ subintervals are sampled from these contiguous sequences (randomly within each contiguous sequence), if $\frac{Q_{i+1}}{Q_i} \leq \frac{1}{2}$ and  $\frac{K_i}{C} \leq \frac{1}{4}$, the number of maximal subsequences of unsampled intervals of size at least $Q_{i+1}$ (for any distribution of samples across the $D_i$ contiguous sequences), $X_{i+1}$, satisfies \begin{equation}
    \mathbb{E}[X_{i+1}] \geq  D_i \alpha_i(K_i)
.\end{equation} 
Further, the probability that $X_{i+1}$ is smaller than half of the expected value is bounded as follows,
\begin{equation}
    \Pr[X_{i+1} \leq  \frac{1}{2} D_i \alpha_i(K_i)] \leq \frac{4}{D_i \alpha_i(K_i)}
.\end{equation} 
\end{lemma}
\begin{proof} 
Let $\kappa_jQ_i$ be the number of samples from contiguous sequence $j$, with $\sum_{j=1}^{D_i}\kappa_j=D_iK_i$.
Then, by Lemma~\ref{CoroDistributionTheory}, the number of maximal subsequences in contiguous sequence $j$ of unsampled subintervals of size at least $n$ satisfies $\mathbb{E}[\hat{X}_{i+1}(\kappa_j) ] = \frac{(\kappa_jQ_i+1){(CQ_i-\kappa_jQ_i)^{(n)}}}{(CQ_i)^{(n)}}$ and $\Var[\hat{X}_{i+1}(\kappa_j)] = \frac{(\kappa_j Q_i+1)^{(2)}((C-\kappa_j)Q_i)^{(2n)}}{(CQ_i)^{(2n)}}+\mathbb{E}[\hat{X}_{i+1}](1-\mathbb{E}[\hat{X}_{i+1}])$.
We pick $n=C(Q_{i+1}+1)$, which ensures that the sequence of subintervals contains the entirety of $Q_{i+1}$ intervals, and hence no subintervals in these intervals are sampled.
Further, $\mathbb{E}[\hat{X}_{i+1}] \geq \alpha_i(\kappa_j)$ since, by similar analysis as in the one-round case,
\begin{equation}
\begin{aligned}
    \frac{\mathbb{E}[\hat{X}_{i+1}(\kappa_j)]}{\kappa_jQ_i + 1} &= \prod_{j=0}^{n-1} \frac{CQ_i-\kappa_jQ_i - j}{CQ_i -j} \\
   \geq \bigg(1-\frac{\kappa_jQ_i}{CQ_i-n}\bigg)^{n}
    &\geq \bigg(1-\frac{\kappa_jQ_i}{CQ_i-CQ_{i+1}}\bigg)^{n}\\
    \geq \bigg(1-\frac{\kappa_jQ_i}{CQ_i-CQ_{i}/2}\bigg)^{n} 
    &= \Big(1 - \frac{2\kappa_j}{C}\Big)^n
    \geq e^{\frac{-4\kappa_jn}{C}} = e^{-4\kappa_j(Q_{i+1}+1)}. \notag
\end{aligned}
\end{equation}
Since sampling is independent across all $D_i$ contiguous sequences (since they are maximal, they must be disjoint), the total number of maximal subsequences of unsampled subintervals of size at least $n$, $X_{i+1}$, satisfies
\[
\mathbb{E}[X_{i+1}] = \sum_{j=1}^{D_i}\mathbb{E}[\hat{X}_{i+1}(\kappa_j)] \geq \sum_{j=1}^{D_i} \alpha_i(\kappa_j).
\]
Since, $\sum_{j=1}^{D_i}\kappa_j=D_iK_i$ and $\alpha_i'(x)<0$ for $x>1$, as
\begin{align*}
\alpha_i'(x)&=e^{-4x(Q_{i+1}+1)}(-4Q_i(Q_{i+1}+1)x+Q_i-4(Q_{i+1}-1) \\
&\leq
- e^{-4x(Q_{i+1}+1)}(4Q_{i}(Q_{i+1}+1)),
\end{align*}
$\mathbb{E}[X_{i+1}]$ is minimized when $\kappa_j = K_i$, so $\mathbb{E}[X_{i+1}]\geq D_i\alpha_i(K_i)$.

To quantify the variance, we observe that for all $m,k \leq n$, 
\begin{equation}
    n^{(2k)} \leq \big(n^{(k)}\big)^2 \text{ and }
    \frac{(n-m)^{(2k)}}{n^{(2k)}} \leq \bigg(\frac{(n-m)^{(k)}}{n^{(k)}}\bigg)^2.
\end{equation}
Using the above inequalities and Corollary \ref{CoroDistributionTheory}, the variance for one contiguous sequence is
\begin{equation}
\begin{aligned}
    \Var[\hat{X}_{i+1}(\kappa_i)] &= 
    \mathbb{E}[\hat{X}_{i+1}(\kappa_i)]  \\
    &+ \bigg(\frac{(K_i Q_i+1)^{(2)}((C-K_i)Q_i)^{(2n)}}{(CQ_i)^{(2n)}} - \mathbb{E}^2[\hat{X}_{i+1}(\kappa_i)]\bigg) \\
    &\leq \mathbb{E}[\hat{X}_{i+1}(\kappa_i)].
\end{aligned}
\end{equation}
Further, across $D_i$ sequences, using independence, we have
\begin{equation}
 \begin{aligned}
          \Var[X_{i+1}] &=
          \Var\bigg[\sum_{j=1}^{D_i} \hat{X}_{(i+1)}(\kappa_j) \bigg] = \sum_{j=1}^{D_i} \Var[\hat{X}_{(i+1)}(\kappa_j)]\\
          &\leq \sum_{j=1}^{D_i} \mathbb{E}[\hat{X}_{(i+1)}(\kappa_j)]] \leq \mathbb{E}[X_{(i+1)}]. 
 \end{aligned}
 \end{equation}

\noindent We now obtain the desired bound on deviation from expectation by using Cantelli's inequality,
\begin{equation}
    \Pr[X_{i+1} \leq \mathbb{E}[X_{i+1}]-\lambda] \leq \frac{\Var[X_{i+1}] }{\Var[X_{i+1}] + \lambda^2}.
\end{equation}
Substituting $\lambda = \frac{\mathbb{E}[X_{i+1}]}{2}$, we get
\begin{equation}
    \begin{aligned}
    \Pr&[X_{i+1} \leq  \frac{1}{2} D_i \alpha_i(K_i)] \leq
        \Pr[X_{i+1} \leq  \frac{1}{2}\mathbb{E}[X_{i+1}]]\\ & \leq \frac{\Var[X_{i+1}]}{ \Var[X_{i+1}] + \mathbb{E}^2[X_{i+1}]/4}
           \leq \frac{\mathbb{E}[X_{i+1}]}{ \mathbb{E}[X_{i+1}] + \mathbb{E}^2[X_{i+1}]/4} \\
        &= \frac{4}{4 + \mathbb{E}[X_{i+1}]} \leq \frac{4}{\mathbb{E}[X_{i+1}]}
        \leq \frac{4}{\mathbb{E}[X_{i+1}| \forall j, \kappa_j = K_i ]} = \frac{4}{D_i \alpha_i(K_i)}.
    \end{aligned}\notag
\end{equation}
\end{proof}


\begin{theorem}
\label{TheoremConstantRounds}
Any randomized algorithm, given $O(p)$ samples for each round, 
requires $T = \Omega(\log^*(p))$ number of rounds to achieve $2$-balanced partitioning with probability greater than $1/2$, for large enough $p$.
\end{theorem}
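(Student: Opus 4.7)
The plan is to combine Yao's principle with the inductive machinery built up in Lemmas~\ref{LemmaOneStep}--\ref{LemmaInductionLogStarP} to show that, when $T = \lfloor \log^* p / 6 \rfloor$, some contiguous sequence of unsampled intervals survives all $T$ supersteps with probability at least $1/2$, which prevents any $2$-balanced partition from being produced and hence establishes the $\Omega(\log^* p)$ lower bound.

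First, I would invoke Corollary~\ref{YaoPrincipleCoro} to reduce the claim to showing that, on the input distribution of Construction~\ref{inputDistribution}, every deterministic algorithm fails to produce a $2$-balanced partition in $T$ supersteps with probability at least $1/2$. Since a $2$-balanced partition requires every interval to contain at least one sample, it suffices to show that $D_{T+1} \geq 1$ with probability at least $1/2$; equivalently, that at least one contiguous sequence of $Q_{T+1} \geq 4$ unsampled intervals survives the $T$ supersteps.

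Next, I would define the ``good trajectory'' event $E = \bigcap_{i=1}^{T} \{D_{i+1} > \tfrac{1}{2} D_i \alpha_i\}$. By Lemma~\ref{LemmaInductionLogStarP}, on $E$ the preconditions $K_i/C \leq 1/4$ of Lemma~\ref{LemmaOneStep} hold at every superstep (so the recursion is self-consistent), and moreover $D_{T+1} > C/\log\log p > 0$ for $p$ sufficiently large. Thus the event $E$ already forces the algorithm to fail to produce a $2$-balanced partition, and the remaining task is to lower-bound $\Pr[E]$. For that, I would bound $\Pr[\bar E]$ by a union bound taken superstep-by-superstep, conditioning at step $i$ on the good trajectory of steps $1,\ldots,i-1$. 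The Cantelli-style tail estimate $\Pr[D_{i+1} \leq \tfrac{1}{2} D_i \alpha_i \mid \text{history}] \leq 4/(D_i \alpha_i)$ from Lemma~\ref{LemmaOneStep} applies on this conditional event, and using the chain $D_i \alpha_i \geq 2 D_{i+1} \geq 2C/\log\log p$ from Lemma~\ref{LemmaInductionLogStarP} with the standard choice $C = \sqrt{p}$ (as in Theorem~\ref{TheoremOne}) gives $\Pr[\bar E] = O(T \log\log p / \sqrt{p}) = o(1)$, so $\Pr[E] \geq 1/2$ for $p$ large. Since no deterministic algorithm succeeds with probability more than $1/2$ on inputs drawn from $\widetilde{\mathcal{D}}_T$ in $T = \lfloor \log^* p / 6 \rfloor$ supersteps, Corollary~\ref{YaoPrincipleCoro} yields the stated $\Omega(\log^* p)$ round lower bound for any randomized algorithm.

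The main obstacle is the circular interdependence between the tail bound of Lemma~\ref{LemmaOneStep}, which is only valid when $K_i/C \leq 1/4$, and the preconditions of Lemma~\ref{LemmaInductionLogStarP}, which themselves are only guaranteed on the good trajectory. The cleanest way to break this circularity is to peel off one superstep at a time, incorporating the conditioning on ``all earlier supersteps were good'' into the tail bound at step $i$; this way each new summand in the union bound is well-defined, the inductive hypothesis of Lemma~\ref{LemmaInductionLogStarP} remains in force throughout, and the total failure probability stays vanishingly small compared to $1/2$.
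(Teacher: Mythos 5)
Your proof follows essentially the same route as the paper: reduce to deterministic algorithms via Corollary~\ref{YaoPrincipleCoro}; argue that a $2$-balanced partition requires every target interval to be hit, so it suffices to show $D_{T+1}\geq 1$ with probability $\geq 1/2$; define the good-trajectory event superstep by superstep, invoke Lemma~\ref{LemmaInductionLogStarP} to keep the preconditions of Lemma~\ref{LemmaOneStep} in force; and finish with a union bound over the $T$ Cantelli tail estimates. The paper's proof does exactly this, phrased via events $Y_i=\{D_i > \mathbb{E}[D_i]/2 \mid Y_1,\ldots,Y_{i-1}\}$, and you correctly identify the circularity between the tail bound and its preconditions, and the correct resolution (condition each step on the good history).

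There is one concrete error worth flagging. You set $C=\sqrt{p}$, quoting the single-round Theorem~\ref{TheoremOne}, but that choice is incompatible with the multi-round machinery you are invoking. The Definition preceding Lemma~\ref{LemmaHelperPowerTower} fixes the schedule $Q_i=\lfloor 4e^{T-i+1}\rfloor$, so $Q_1=\lfloor 4e^T\rfloor$; since $C\cdot Q_1=p$ this forces $C = p/Q_1 = p/(4e^T)$, which is what the paper uses. With $C=\sqrt{p}$ you would have $Q_1=\sqrt{p}$, and the bound in Lemma~\ref{LemmaHelperPowerTower}, which relies on $Q_1+1\leq 4e^T$ to control the exponent $4(i-1)K_{i-1}(Q_1+1)$ by $O(Te^T K_{i-1})$, would instead have an exponent of order $\sqrt{p}\cdot K_{i-1}$, collapsing the iterated-exponential argument and hence the conclusion $K_i<\log\log p$, $D_i>C/\log\log p$ of Lemma~\ref{LemmaInductionLogStarP} that your union bound depends on. The fix is simply to take $C = p/(4e^T)$; your final estimate then becomes $\Pr[\bar E]=O\!\left(T\,\log\log p\cdot e^T/p\right)=o(1)$, which still goes to zero since $e^T=e^{\log^* p/6}$ is subpolynomial, and the rest of your argument goes through unchanged.
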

\begin{proof}
We show that any histogram-based sorting algorithm requires more than $T=\frac{\log^*(p)}{6}$ rounds to achieve a constant probability of success with $p$ samples per round\footnote{Note that increasing the sample per round by a constant factor to $(p\alpha)$ per round can lead to at most a $\alpha$ factor decrease in the number of rounds for the lower bound.}.
By the same argument as in the proof of Theorem~\ref{TheoremOne}, this bound also extends to the round complexity of any randomized algorithm that communicates a total of $O(p)$ keys per round, completing the proof.
Our argument proceeds by induction on the round number $i$.
We show that after the $i^{th}$ sample-histogram round that, with probability at least $1-\sum_{j=1}^{i}\frac{4}{D_i\alpha_i(K_i)}$, there are at least $D_{i+1}$ contiguous sequences of \begin{equation}
    Q_{i+1}=\lfloor e^{T-i+1}\rfloor
\end{equation} (unsampled) intervals left, where $D_1=C=\frac{p}{4e^{T}}$ and $D_{i+1}=\frac{D_i\alpha_i(K_i)}{2}$. We choose $K_i$ such that $K_iD_iQ_i = p$ for all $i$. 
We conservatively assume that exactly $D_i$ contiguous sequences are left before the $i^{th}$ round with the above probability as having more than $D_i$ contiguous sequences left can only worsen the complexity of the algorithm.

%
%

For $i=0$ (before the first round), the inductive assumption holds since there are $C$ contiguous (unsampled) sequences each containing $\frac{p}{C}=4e^T>Q_1$ intervals.
For the inductive step, given that prior to the $(i+1)^{th}$ sample-histogram, 
(assuming $K_i\leq C/4$, which we prove next), we apply Lemma~\ref{LemmaOneStep} to lower bound the number of unsampled contiguous sequences for the next round with high probability, namely, we obtain
\begin{align*}
\Pr[X_{i+1} \geq  \frac{1}{2} D_i \alpha_i(K_i)] &\geq \bigg(1-\frac{4}{D_i \alpha_i(K_i)}\bigg)\bigg( 1-\sum_{j=1}^{i-1}\frac{4}{D_j\alpha_j(K_j)}\bigg) \\
&\geq 1-\sum_{j=1}^{i}\frac{4}{D_j\alpha_j(K_j)}.
\end{align*}
To complete the inductive argument, it then suffices to prove $K_i\leq \frac{C}{4}$, given our inductive assumption.

We unravel the inductive relationship to bound $K_i$ and to obtain the final bound on the probability after $T$ steps.
Assume, without loss of generality, that there are exactly $p$ samples per round, then the number of samples per unsampled interval at round $1$ is $K_1=1$.
Further, at round $i+1$,
\begin{equation}
\begin{aligned}
K_{i+1} \ &=\ \frac{p}{D_{i+1}Q_{i+1}} 
   \ =\ \frac{2pe^{4K_i(Q_{i+1}+1)}}{D_i(K_iQ_i+1)}Q_{i+1} \\
        &\leq\ \frac{2pe^{4K_i(Q_{i+1}+1)}}{D_iK_iQ_i}Q_{i+1}
       \ =\  2e^{4K_i(Q_{i+1}+1)}Q_{i+1} \\
        &\leq M(T) e^{K_iM(T)},\quad\text{where}\quad M(T)=4(e^T+1).
\end{aligned}
\end{equation}

\noindent  Let 
\[R_i =  M(T)e^{M(T)e^{M(T)e^{{...}^{M(T)}}}} \Big\}\text{$i$ powers.}\]
For large enough $T$, we have
\begin{equation}
    \begin{aligned}
    \log^* (R_i) &= 1 + \log^* (\log R_i) =  1 + \log^* (\log M(T) + X_{i-1}) \\
    &\leq 1 + \log^* (X_{i-1}^2) < 1 + \log^* (e^{X_{i-1}}) \\
    &\leq 2 + \log^* (X_{i-1}) \leq 2i + \log^* (M(T)) \\
    &\leq 2T + 1 + \log^* (2T) \leq 3T.
    \end{aligned}
\end{equation}
Note that\footnote{$\uparrow \uparrow$ denotes tetration and $\log^*$ denotes super-logarithm which is the inverse operation of tetration.} $\log^* (x) = 1 + \log^* (\log x)$ by definition.
Therefore, since $\log^* (z) = y$ if and only if $z = e \uparrow \uparrow y$, \begin{equation}
\begin{aligned}
K_i &< R_i = e \uparrow \uparrow \log^* (R_i) < e \uparrow \uparrow 3T.\\
\end{aligned}
\end{equation}
Since, $T=\frac{\log^*(p)}{6}$, we have that
\begin{align*}
K_i<e \uparrow \uparrow 3T &< e\uparrow \uparrow (\frac{1}{2} \log^*(p))  \leq e\uparrow \uparrow (\log^* (p) - 2) \\
&= e\uparrow \uparrow (\log^* (\log^2(p))) = \log \log(p).
\end{align*}
Since, $C=\frac{p}{4e^T}$ and $\frac{p}{e^T}>\log \log p$ for sufficiently large $p$, we have shown that $K_i<\frac{C}{4}$.
Further, since $K_iD_iQ_i=p$ for all $i$, we have that
\begin{equation}
\begin{aligned}
D_T &= \frac{K_1D_1Q_1}{K_TQ_T} > \frac{C}{e \uparrow \uparrow 3T} > \frac{C}{\log \log p}.
\end{aligned}
\end{equation}
Since $D_T > \frac{C}{\log \log p} \geq 1$ for large enough $p$, it suffices to apply our inductive bound on probability,
\begin{align*}
\Pr[X_{T} \geq  1] &\geq \Pr[X_{T} \geq  \frac{C}{2\log \log p}] 
\geq \Pr[X_{T} \geq  \frac{1}{2}D_{T-1}\alpha_{T-1}(K_{T-1})] \\
&\geq 1-\sum_{j=1}^{T-1}\frac{4}{D_j\alpha_j(K_j)} 
\geq 1-T\frac{2}{D_T} 
\geq 1-\frac{T \log \log p }{2C} 
\geq \frac{1}{2},
\end{align*}
for sufficiently large $p, C \geq T\log \log p$, which completes the proof.
%
\end{proof}
Note that we have proved a slightly stronger theorem than what we need to show optimality of the new algorithm.
We have shown a lower bound of $\Omega(\log^* p)$ rounds when the algorithm samples $O(p\log^*(p))$ keys in total, instead of $O(p)$.

\section{Discussion and related work}
Parallel sorting is an extensively studied algorithm, both in the experimental and theoretical contexts.

\noindent \textbf{Multiple levels of partitioning and data movement}: To alleviate the bottleneck due to the data partitioning step or in the final data-exchange (where $O(p^2)$ messages have to be exchanged), some algorithms~\cite{sundar2013hyksort,axtmann2015practical,hssSPAA} resort to a 2-level sorting algorithm where the algorithm first groups together several processors into a "processor-group" and first executes parallel sorting at the level of processor-groups and then within each processor group.
In this context, the Histogram partitioning algorithm may be applied to efficiently partition keys across and within processor-groups.

\noindent \textbf{Merge-based sorting networks}: Classical results, such as by Batcher~\cite{batcher1968sorting}, Cole~\cite{cole1988parallel} and AKS~\cite{ajtai19830} use sorting networks and optimize the depth of the network. Our paper deals with large scale parallel sorting which are \textit{partitioning-based}, and move data only once, while typical sorting-networks naively imply many rounds of interprocessor communication. Leighton~\cite{tightBoundsSorting} presented tight lower bounds for sorting networks, but no such bound is known for partitioning-based algorithms.

\noindent \textbf{Experimental validation}: 
The HSS algorithm~\cite{hssSPAA} has been shown to outperform other state-of-the-art parallel sorting implementations, as well as to achieve good scalability and partitioning balance.
We note that Histogram partitioning is faster than HSS~\cite{hssSPAA} by a factor of $\Omega(\log^* p)$ and while the sampling ratio suggested by our work is more efficient, in a practical setting it is unlikely to result in a large improvement over HSS. 

\noindent \textbf{Optimal sample complexity}: 
Note that in Theorem \ref{TheoremConstantRounds}, we have also given a formula to compute the lower bound for $K_i$ (the number of samples at each round) and the total number of samples to achieve $2$-balanced partitioning for any randomized algorithm for every constant round $T = 2, 3, 4...$ similar to Theorem \ref{TheoremOne}. Thus these results may be derived easily in future works. 

\noindent \textbf{Related partitioning problems}: 
The partitioning problem we consider also provides a parallel method for computing an approximate ranking of a sequence~\cite{heckel2018approximate}.
In contrast to~\cite{heckel2018approximate}, our lower bounds assume all-to-one comparisons with each sample, as opposed to minimizing pairwise comparisons.

Communication-efficient parallel sorting algorithms have been the subject of many prior studies.
For example~\cite{asymmetricSorting}, study sorting algorithms with asymmetric read and write costs, while ~\cite{cacheObliviousAlgorithms,lowDepthCacheOblivious} study cache oblivious sorting algorithms.
We believe our input distribution and the technique to applying distribution theory of runs can be a valuable tool for analyzing new parallel sorting algorithms, as well as techniques for other problems such as approximate ranking, \textit{semisort} \cite{gu2015top}, and \textit{quantile computation} \cite{chen2020survey}.  

\section{Conclusion}
Data partitioning is a crucial building block of parallel sorting with several past approaches. Via new theoretical analysis, we introduce a histogramming strategy that achieves a balanced partitioning with minimal sample volume and round complexity ($O(p/\log^* p)$ samples per round, $O(\log^* p)$ rounds and $O(p)$ total samples).
To show optimality, we present a lower bound ($\Omega(\log^* p)$) on the number of rounds required for any randomized algorithm to achieve a $2$-partition with $O(p)$ samples per round.
This analysis provides an asymptotically tight analysis of randomized histogramming, a widely used and practical parallel sorting technique.

\bibliographystyle{plainnat}
\bibliography{paper}

\appendix


\section{Appendix: Additional Pseudocode and Diagrams}

This appendix provides some supplementary details to our main contributions. 
Algorithm~\ref{hss:alg} describes the details of the approach presented in Section~\ref{histopartAlg}.
Figure~\ref{fig:lb_outline} provides a pictorial overview of our lower bound argument in Section~\ref{lowerBound}.

\begin{figure}
{
\centering
\includegraphics[width=0.96\linewidth]{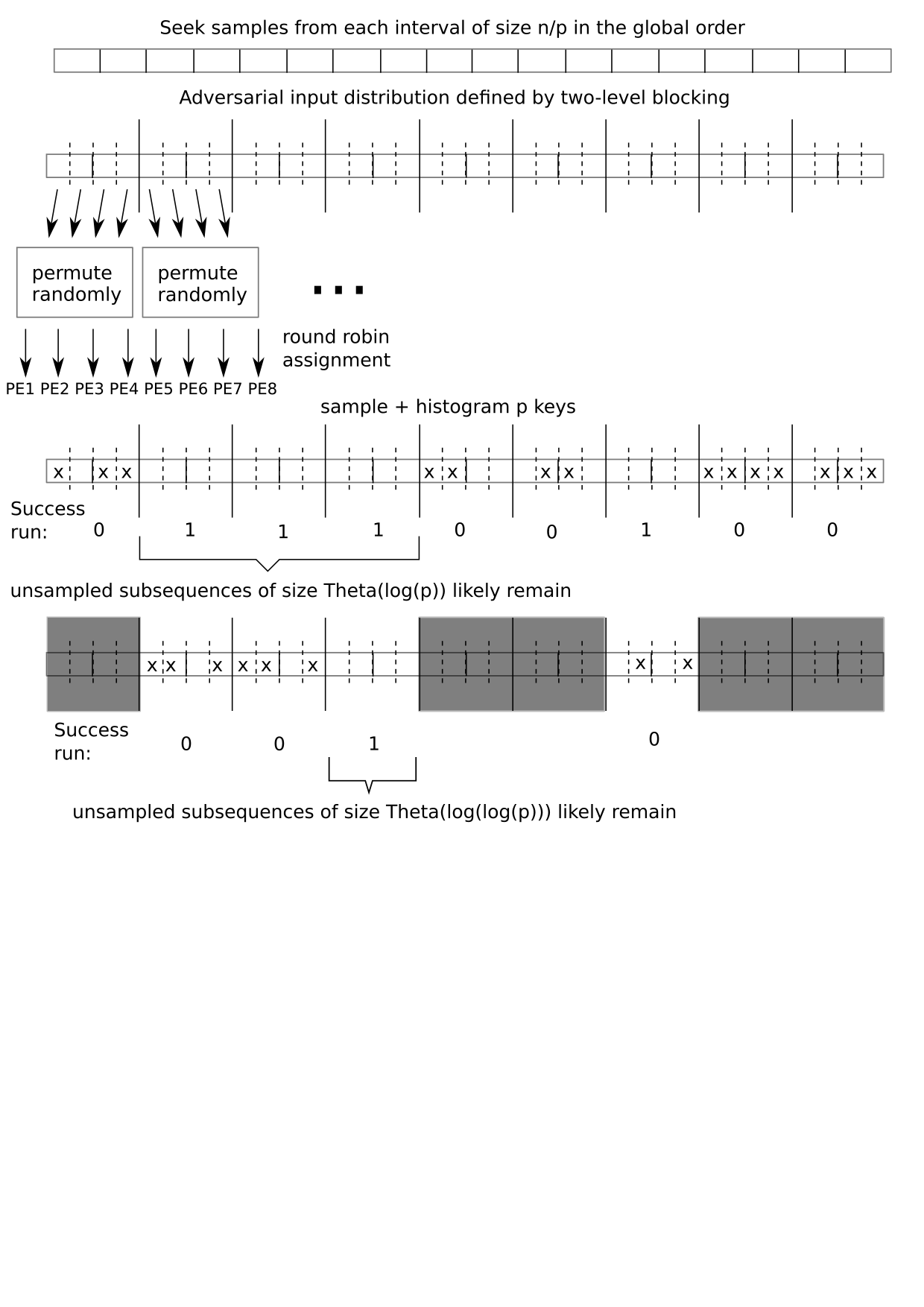}
\vskip -4.5cm
\caption{Overview of the construction of the input distribution used in our proof as well as the connection between rounds of sampling/histogramming and success runs.}
\label{fig:lb_outline}
}
\end{figure}

{\small
\begin{algorithm}
 \SetAlgoNoLine%
  \LinesNumbered
 Processors $\mathcal{P}_0, \mathcal{P}_1 ... \mathcal{P}_p$ start with $\mathcal{P}_i$ owning $n/p$ distinct keys contained $\mathcal{A}_i$ with $\mathcal{A}=\bigcup_{i=1}^p \mathcal{A}_i$\\
 Initialize bounds to include all elements, e.g., $L_k=I(1), U_k=I(N), \ \forall k\in \{1,\ldots, p-1\}$ \\
 Initialize the set of invalid splitters as $W=\{1,\ldots, p-1\}$ \\
 \While{$|W|> 0$}{
 \textbf{At all processors $\mathcal{P}_i$:} \\
   Set the sampling ratio as $\psi=\frac{2p^2}{|W|N\log^*p}$ \\
   Sample each key in $\mathcal{A}_i \cap \Big(\bigcup_{k \in W}[L_k, U_k]\Big)$ with sampling probability $\psi$ to create subset $S_i$\\
   Allgather and merge/sort sample keys to obtain full sample $S=\bigcup_{i=1}^{p-1} S_i$\\
   Compute local histogram $\mathcal{J}$ for $\mathcal{S}$\\
   Reduce all local histograms (accumulate $\mathcal{J}$ to obtain $\mathcal{H}$ on $\mathcal{P}_0$)\\
 \textbf{At processor $\mathcal{P}_0$:} \\
  Receive combined reduced histogram $\mathcal{H}$ for $\mathcal{S}$ from all $\mathcal{P}_i$\\
  Update lower/upper bound keys $L_k$ and $U_k$ for all partitions $k$\\ 
  Set $s_k$ to $L_k$ or $U_k$, depending on whether $R(L_k)$ or $R(U_k)$ is closer to $Nk/p$ for all $k$ \\
  Update $W$ by removing indices of all splitters that are valid within $\epsilon$ error (the $k^{th}$ splitter is valid if $R(s_k)\in [k\frac{N}{p}, k(1+\epsilon)\frac{N}{p}]$) \\
  Broadcast updates to $W$ and $\{L_k,U_k\}_{k=1}^{p-1}$ to all $\mathcal{P}_i$ for next iteration\\
 \textbf{At all processors $\mathcal{P}_i$:} \\
   Receive updates to $\{L_k,U_k\}_{k=1}^{p-1}$ and $W$ from $\mathcal{P}_0$ \\
 }
  Collect (if necessary) and output splitters $s_1,\ldots s_{p-1}$
  \caption{HSS partitioning algorithm, steps are numbered in the order they are executed}
  \label{hss:alg}
 \end{algorithm}
}

\clearpage

%




\end{document}